\numberwithin{equation}{section}
\newcommand{\Int}{\displaystyle\int\limits}
\newtheorem{theo}{{\bf{Theorem}}}[section]
\newtheorem{prop}[theo]{{\bf Proposition}}
\newtheorem{lem}[theo]{{\bf Lemma}}
\newtheorem{Definition}[theo]{\bf Definition}
\newtheorem{rem}{{\bf Remark}}[section]
\newtheorem{remark}[rem]{Remark}
\newtheorem{defi}{{\bf Definition}}[section]
\renewcommand{\proof}{\noindent{\bf Proof.\ }}
\newcommand{\no}{\nonumber}
\newcommand{\noi}{\noindent}
\newcommand{\pa}{\partial}
\newcommand{\RR}{\mathbb{R}}
\newcommand{\si}{\sigma}
\newcommand{\la}{\lambda}
\newcommand{\sig}{\sigma}
\newcommand{\tab}{\hspace*{0.3in}}
\newcommand{\tS}{\tilde{S}}
\newcommand{\vf}{\varphi}
\newcommand{\vs}{\varsigma}
\newcommand{\vp}{\varepsilon}
\begin{document}
	
	\title{Pricing Derivatives in a Regime Switching Market with Time Inhomogeneous Volatility\thanks{The research of the first author was supported by UGC Fellowship.}}
	\author{Milan Kumar Das\thanks{IISER, Pune 411008, India; email: milankumar.das@students.iiserpune.ac.in.}\qquad\qquad Anindya Goswami\thanks{IISER, Pune 411008, India;
			email: anindya@iiserpune.ac.in.}\qquad\qquad Tanmay S. Patankar\thanks{IISER, Pune 411008 India; email: tanmaysp1004@gmail.com.} }
	
	\date{}
	
	\maketitle
	
	{\bf Abstract:} This paper studies pricing derivatives in an age-dependent semi-Markov modulated market. We consider a financial market where the asset price dynamics follow a regime switching geometric Brownian motion model in which the coefficients depend on finitely many age-dependent semi-Markov processes. We further allow the volatility coefficient to depend on time explicitly. Under these market assumptions, we study locally risk minimizing pricing of a class of European options. It is shown that the price function can be obtained by solving a non-local B-S-M type PDE. We establish existence and uniqueness of a classical solution of the Cauchy problem. We also find another characterization of price function via a system of Volterra integral equation of second kind. This alternative representation leads to computationally efficient methods for finding price and hedging. Finally we analyse the PDE to establish continuous dependence of the solution on the instantaneous transition rates of semi-Markov processes. An explicit expression of quadratic residual risk is also obtained.

	{\bf Keywords:} semi-Markov processes, Volterra integral equation, non-local parabolic PDE, locally risk minimizing pricing, optimal hedging
	
	{\bf Classification No:} 60K15, 91B30, 91G20, 91G60.
	\section{Introduction}
In 1971 Black, Scholes and Merton considered a mathematical model of asset price dynamics to find an expression of price of a European option on the underlying asset. In their model, the stock price process is modeled with a geometric Brownian motion. The drift and the volatility coefficients of the price are taken as constants. Since then, numerous different improvements of their theoretical model are being studied. Regime switching models are one such extension of the Black-Scholes-Merton(B-S-M) model. In regime switching model it is assumed that the market has finitely many hypothetical observable  possible economic states and those are realized for certain random intervals of time. The key market parameters are assumed to depend on those regimes or states and the state transitions are modeled by a pure jump process. Extensive research has been done to study markets with Markov-modulated regime switching \cite{BAS},\cite{BE},\cite{DES},\cite{DKR},\cite{ECS},\cite{AGK},\cite{GZ},\cite{RR},\cite{MR}.
There are also some further generalization, carried out by several authors by introducing jump discontinuities in the asset dynamics along with Markov regimes. In all these works the possibility of switching regimes is restricted to the class of finite state Markov Chains.

	In comparison with Markov switching, the study of semi-Markov modulated regime switching is relatively uncommon. In this type of models one has opportunity to incorporate some memory effect of the market. In particular, the knowledge of past stagnancy period can be fed into the option price formula to obtain the price value. Hence this type of models have greater appeal in terms of applicability than the one with Markov switching. The pricing problem with semi-Markov regimes was first correctly solved in \cite{AGMKG}.  It is important to note that the regime switching models lead to incomplete markets. Since there might be multiple no arbitrage prices of a single option, one needs to fix an appropriate notion to obtain an acceptable price. Option pricing with a special type of semi-Markov regime is studied in \cite{AGMKG} using F\"{o}llmer Schweizer decomposition \cite{FS}. There it is shown that the price function satisfies a non-local system of degenerate parabolic PDE. In a recent paper \cite{AJP} the same problem for a more general class of age-dependent processes is studied. An age-dependent process $\{X_t\}_{t\geq 0}$ on  $\mathcal{X}:=\{1,\ldots,k\}\subset \mathbb{R}$ is specified by its instantaneous  transition rate function  $\la:\{(i,j)\in \mathcal{X}^2|i\neq j\}\times [0,\infty)\rightarrow (0,\infty)$  and is defined by the strong solution of the following system of stochastic integral equations
		\begin{eqnarray}
		X_t &=& X_0 + \int_{(0,t]}\int_{\mathbb{R}} h_\lambda(X_{u-}, Y_{u-},z)\wp(du,dz)\text{\label{1}}\\
	Y_t &=& t- \int_{(0,t]} \int_{\mathbb{R}} g_\lambda(X_{u-}, Y_{u-},z) \wp(du,dz),\text{\label{2}}
	\end{eqnarray}
	where $\wp(du,dz)$ is a Poisson random measure with intensity $dudz$, independent of $X_0$ and
	$$  h_\lambda(i,y,z) := \sum_{j \in {\mathcal{X}\setminus \{i\}}} (j-i) 1_{\Lambda_{ij}(y)}(z), ~~~  g_\lambda(i,y,z) := \sum_{j \in {\mathcal{X}\setminus \{i\}}} y 1_{\Lambda_{ij}(y)}(z),$$
	where for each $y\geq 0$, and $i\neq j$, $\Lambda_{ij}(y)$ are the consecutive (with respect to the lexicographic ordering on $\mathcal{X}\times \mathcal{X}$) left closed and right open intervals of the real line,
	each having length $\lambda_{ij} (y)$. We clarify that if $\{(X_t,Y_t)\}_{t\geq 0}$ is the solution of \eqref{1}-\eqref{2}, then $X_t$ is called the age-dependent process and $Y_t$ is called the age process. It is shown in (Th. 2.1.3, \cite{tanmay}) that an age dependent process is a semi-Markov process.
	
	In both the papers \cite {AGMKG} and \cite{AJP}, all the market parameters, namely spot interest rate $r$, drift coefficient $\mu$ and volatility coefficient $\si$ depend on a single semi-Markov process. We recall that although the joint process of two independent Markov processes is Markov, the same phenomena is not valid for semi-Markov case. For this reason, assumption of a single semi-Markov process to derive both $r$ and $\si$ is rather restrictive.  To overcome this restriction, in this paper, we consider a componentwise semi-Markov process (CSM), which is a wider class of pure jump processes than those in \cite {AGMKG} and \cite{AJP}.  A pure jump process $X$ on a finite state space $\mathcal{S}$ is called a CSM if there is a bijection $\Gamma:\mathcal{S}\to \mathcal{X}^{n+1}$ for some non-empty finite set $\mathcal{X}$, and some non-negative integer $n$ such that each component of $\Gamma(X)$ is semi-Markov process, independent to each other. To model the regimes of the market, we consider a CSM $\{X_t\}_{t\geq 0}$ on $\mathcal{X}^{n+1},$ where $X^l$, the $l^{\text{th}}$ component of $X$, is an age-dependent process with instantaneous rate functions $\la^l$, for every $l=0,\ldots,n$. We denote the age process of $X^l$ as $Y^l$ and $Y$ defined as $(Y^0,\ldots,Y^n)$ is the age process of $X$.
	
	In many regime switching models of asset price dynamics, the volatility coefficients do not posses explicit time dependence (see \cite{BAS},\cite{BE},\cite{DES},\cite{DKR},\cite{ECS},\cite {AGMKG},\cite{AGK},\cite{GZ},\cite{RR},\cite{MR}). In such time homogeneous models the volatility $\si$ can take values from a finite set only. Such models fail to capture many other stylized facts including periodicity feature of $\si$. In the present model, we allow $\si$ to be time inhomogeneous.
	
	 In this paper, we consider a market with one locally risk free asset with price $S^0$, and $n$ risky assets with prices $\{S^l\}_{l=1,\ldots,n}$, and  address locally risk-minimizing pricing for a contingent claim $K(S_T)$. Here we consider a wide range of functions $K:\mathbb{R}^n_+\to \mathbb{R}_+$, which includes vanilla basket options. We show that the price of the claim at time $t$, when $(S^l_t,X^l_t,Y^l_t)$ is $(s^l,x^l,y^l)$, for each $l$, is a function $\vf$ of $(t,s=(s^1,s^2,\ldots,s^n),x=(x^0,x^1,\ldots,x^n),y=(y^0,y^1,\ldots,y^n))$ and that satisfies a Cauchy problem.
	In order to write the equation we use a notation $ R^l_j v $, for a vector $ v\in \mathbb{R}^{n+1} $ to denote the vector $ v+(j-v^l)e_l $, in which the $ l^{\textrm{th}} $ component of $ v $ is replaced with $ j $. The system of PDE is given by
	\begin{align}\label{mainpde}
	\no &\frac{\partial \vf}{\partial t} (t, s, x, y)+ \sum_{l=0}^n \frac{\pa \vf}{\pa y^l}(t, s, x, y) +r(x)\sum_{l=1}^n s^l \frac{\pa \vf}{\pa s^l}(t, s, x, y)+\frac{1}{2}\sum_{l=1}^{n}\sum_{l'=1}^{n}a^{ll'}(t,x)s^ls^{l'}\frac{\partial^2\vf}{\partial s^{l}\pa s^{l'}}(t,s,x,y)  \\
	&+ \sum_{l=0}^n \sum_{j\neq x^l}\lambda_{x^lj}^l(y^l)\left[ \vf(t,s,R^l_j x,R^l_0 y)-\vf(t,s,x,y) \right] = r( x) ~ \varphi(t, s, x, y),
	\end{align}
	\noi defined on
	\begin{equation*}
	\mathcal{D}:= \{ (t, s, x, y)\in (0,T)\times (0,\infty)^n\times
	\mathcal{X}^{n+1}\times (0,T)^{n+1} \mid y \in (0,t)^{n+1}\},
	\end{equation*}
	\noi and with conditions
	\begin{align}
	\varphi (T, s, x, y)=& K(s); \tab s\in [0,\infty)^n; \tab 0 \le
	y^l\le T ; \tab x^l\in\mathcal{X},\tab l=0,1,\dots,n, \label{pdeboundary}
	\end{align}
	where the diffusion coefficient $a:=(a^{ll'})_{n\times n}$ is continuous in $t$.

We note that \eqref{mainpde} is a linear, parabolic, degenerate and non-local PDE. The non-locality is due to the occurrence of the term $\vf(t,s,R^l_j x,R^l_0 y)$, where $R^l_0 y$ need not be same as $y$ in general. We establish existence and uniqueness of the classical solution in this paper. We also find a Volterra integral equation of second kind, which is equivalent to the PDE. Using the Banach fixed point Theorem, we show the integral equation has a unique solution. Thus we show that one can find the price function by solving the integral equation which is computationally more convenient than solving the PDE. We also obtain an expression of optimal hedging involving integration of price function.

  This observation essentially leads to a robust computation of optimal hedging. Finally we carry out a sensitivity analysis to establish continuous dependence of solution of the PDE \eqref{mainpde}-\eqref{pdeboundary}	on transition rate functions. This result assures close approximation of price when instantaneous rate is approximated by a consistent estimator as in \cite{AGSN}.
	
 The rest of this paper is arranged in the following manner. We present model description in Section $2$. In this section we first study a class of componentwise semi-Markov processes and then we describe the the asset price dynamics. We have also shown that under admissible strategies the market is arbitrage free.  Section 3 presents the approach of option pricing. In this section we state the main result of the paper. In Section $4$, we establish the existence, uniqueness and regularity of solution of a Volterra integral equation which is shown to be equivalent to the PDE in the next section. Section $5$ deals with the well-posedness of the PDE. In this section we also derive certain properties of the solution and its derivative. Using the results of earlier sections, F-S decomposition of contingent claim is obtained in Section $6$. In Section $7$ we present a sensitivity analysis of the solution to the PDE. We end this paper by calculating the quadratic residual risk in Section $8$. Proof of some lemmata are given in Appendix.
	
	\section{Model description}\label{model}
	\noi This section consists of two subsections. In the first subsection we study a class of CSM processes. In the subsequent subsection we consider a market whose prices are governed by the CSM. Finally we show that, the market is arbitrage free by constructing an equivalent martingale measure.
	\subsection{Componentwise semi-Markov process}  Let $\mathcal{X}=\{1,\ldots,k\}\subset \mathbb{R}$. For every $l=0,1, \ldots, n$, consider a $C^1$ function $\la^l: \mathcal{X}\times  \mathcal{X}\times [0,\infty)\to (0,\infty)$ satisfying the following\\
	
	\textbf{Assumptions}(A1)
	\begin{itemize}
	\item[(i)] $\lambda^l_{ii}(\bar{y})=-\sum_{j\neq i}\lambda^l_{ij}(\bar{y})$,
	\item[(ii)]$\bar{y}\mapsto \la^l_{ij}(\bar{y})$ is continuously differentiable,
	\item[(iii)] if $\Lambda_i^l(\bar{y}):=\Int_0^{\bar{y}} \sum_{j\neq i} \lambda_{ij}^l(v)dv$, then $\underset{\bar{y}\rightarrow\infty}{\lim}\Lambda_i^l(\bar{y})=\infty.$
	
    \end{itemize}
    For each $l=0,\ldots,n$, let us consider a system of equations by replacing $\wp$ by $\wp^l$, $\la$ by $\la^l$ in \eqref{1} and \eqref{2}, where $\wp^l$ are independent Poisson random measures with intensity $dtdz$, defined on a complete probability space  $(\Omega,\mathfrak{F},\mathbb{P})$. That is,
	\begin{eqnarray}
	X^l_t &=& X^l_0 + \int_{(0,t]}\int_{\mathbb{R}} h^l(X^l_{s-}, Y^l_{s-},z)\wp^l(ds,dz)\text{\label{age1}}\\
	Y^l_t &=& t- \int_{(0,t]} \int_{\mathbb{R}} g^l(X^l_{s-}, Y^l_{s-},z) \wp^l(ds,dz),\text{\label{age2}}
	\end{eqnarray}
	where $h^l=h_{\la^l}$ and $g^l=g_{\la^l}$.
	It is shown in \cite{tanmay} using the results of \cite{IKW} that there exists an a.s
	unique strong solution to equations \eqref{age1}-\eqref{age2} and the process $Z^l_t:=(X^l_t,Y^l_t)$ is a time homogeneous strong Markov process. We denote the $\mathcal{X}^{n+1}$ valued process as $X_t$ whose $l$th component is $X^l_t$. Similarly we denote $Y_t=(Y_t^0,\ldots,Y_t^n)$. Thus $X_t$ is a CSM process.

	 Consider for each $l=0,\ldots,n$; $F^l:[0,\infty)\rightarrow [0,1]$ a differentiable function and defined as $F^l(\bar{y}|i):=1-e^{-\Lambda^l_i(\bar{y})}$, where $\Lambda^l_i$ is as in (A1)(iii). Let $f^l(\bar{y}|i):=\frac{d}{d\bar{y}}F^l(\bar{y}|i)$ and for each $j\neq i, p^l_{ij}(\bar{y}):=\frac{\la^l_{ij}(\bar{y})}{|\la^l_{ii}(\bar{y})|}$ with $p^l_{ii}(\bar{y})=0$ for all $i$ and $\bar{y}$. Set $\hat{p}^l_{ij}:=\Int_0^\infty p^l_{ij}(\bar{y}) dF^l(\bar{y}|i)$. In addition to (A1)(i)-(iii), we make the following   assumption
	 \begin{itemize}
		\item[(A1)] (iv) the matrix $(\hat{p}^l_{ij})_{k\times k}$ is irreducible.\\
	\end{itemize}  From the definition of $F^l$ and assumptions (A1)(ii)-(iii), we observe $0<F^l(\bar{y}|i)<1\quad \forall \bar{y}>0$ and $F^l(\bar{y}|i)\uparrow 1$ as $\bar{y}\to \infty$. It can easily be verified that $\la^l_{ij}(\bar{y})=p^l_{ij}(\bar{y})\frac{f^l(\bar{y}|i)}{1-F^l(\bar{y}|i)}$ hold for $i\neq j$. Let $T^l_n$ denote the time of $n^{\text{th}}$ transition of $X^l_t$ and $n^l(t)$ denote the total number of transitions upto time $t$ of $X^l_t$ i.e. $n^l(t):=\max\{n:T^l_n\leq t\}$. Hence $T^l_{n^l(t)}\leq t \leq T^l_{n^l(t)+1}$ and from \eqref{age2} $Y^l_t=t-T^l_{n^l(t)}$. It is shown in \cite{GhS} that $F^l(.|i)$ is the conditional c.d.f of the holding time of $X^l$ and $p^l_{ij}(\bar{y})$ is the
	conditional transition probability matrix. Let $\tau^l(t)$ be the duration after which $X^l_t$ would have a transition. Note that $\tau^l(t)$ is independent of
	every component of $X$ other than $l$th one. Let $F_{\tau^l}(\cdot|i,\bar{y})$ be the conditional c.d.f of $\tau^l(t)$ given $X^l_t=i$ and $Y^l_t=\bar{y}$. We note that this c.d.f does not depend on $t$ since $(X_t,Y_t)$ is time homogeneous. Therefore, $\tau^l(t)+Y^l_t$ is the duration of $X^l_t$ at present state between two transitions. Let $\ell(t)$ be the component of $X_t$ where the subsequent jump happens. Let $F_{\tau^l|l}(\cdot|x,y)$ be the conditional c.d.f of $\tau^l(t)$ given $X_t=x, Y_t=y$ and $\ell(t)=l$ and  $f_{\tau^l|l}(\cdot|x,y)$ be the conditional p.d.f of $\tau^l(t)$ given $X_t=x, Y_t=y$ and $\ell(t)=l$. From now we denote $P(\cdot|X_t=x,Y_t=y)$ by $P_{t,x,y}(\cdot)$ and the corresponding conditional expectation as $E_{t,x,y}(\cdot)$. We wish to compute $P_{t,x,y}(\ell(t)=l)$ i.e. the conditional probability of observing next jump to occur at the $l$th component given, $X_t=x$ and $Y_t=y$.
	 We compute this probability and some other conditional distribution and density functions in the following lemma.
	
	\begin{lem}\label{ptxyltl}
	Let $P_{t,x,y}(\ell(t)=l), F_{\tau^l|l}(v|x,y), f_{\tau^l|l}(v|x,y)$ be as above. Then the following hold
		\begin{enumerate}[(i)]
			\item $P_{t,x,y}(\ell(t)=l)= \Int_0^\infty\displaystyle\prod_{m\neq l}\frac{1-F^m(s+y^m|x^m)}{1-F^m(y^m|x^m)}\frac{f^l(s+y^l|x^l)}{1-F^l(y^l|x^l)}ds,$
			\item
		
		$F_{\tau^l|l}(v|x,y)= \frac{\Int_0^v\displaystyle\prod_{m\neq l}(1-F^m(s+y^m|x^m))f^l(s+y^l|x^l)ds}{\Int_0^\infty\displaystyle\prod_{m\neq l}(1-F^m(s+y^m|x^m))f^l(s+y^l|x^l)ds}$ , and\\
			$f_{\tau^l|l}(v|x,y)=\frac{\prod_{m\neq l}
				(1-F^m(v+y^m|x^m))f^l(v+y^l|x^l)}{\Int_0^\infty\prod_{m\neq l}(1-F^m(s+y^m|x^m))f^l(s+y^l|x^l)ds}$.
			Furthermore, $f_{\tau^l|l}(v|x,y)$ is differentiable with respect to $v$ and  we denote the derivative by $f'_{\tau^l\mid l}(v\mid x,y)$ and
			\item $f_{\tau^l|l}(0|x,y)P_{t,x,y}(\ell(t)=l)=\frac{f^l(y^l|x^l)}{1-F^l(y^l|x^l)}=f_{\tau^l}(0|x^l,y^l)$.
		 \end{enumerate}
	\end{lem}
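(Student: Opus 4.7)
The plan is to exploit the key structural fact that the $(n+1)$ age-dependent component processes $\{X^l, Y^l\}$ are driven by mutually independent Poisson random measures $\wp^l$ on $(\Omega,\mathfrak{F},\mathbb{P})$, hence are independent of one another. Consequently, conditional on $\{X_t=x, Y_t=y\}$, the residual sojourn times $\tau^0(t),\ldots,\tau^n(t)$ are conditionally independent, each depending only on its own component's state and age. I would first record the one-dimensional conditional survival/hazard identity for each component: since $F^l(\cdot|i)$ is the c.d.f.\ of the holding time of $X^l$ in state $i$, the lack-of-memory update gives
\[
\mathbb{P}_{t,x,y}\bigl(\tau^l(t)>s\bigr)=\frac{1-F^l(s+y^l|x^l)}{1-F^l(y^l|x^l)},\qquad f_{\tau^l}(s\mid x^l,y^l)=\frac{f^l(s+y^l|x^l)}{1-F^l(y^l|x^l)}.
\]
Evaluating the density at $s=0$ immediately yields the last equality in (iii).

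For (i), I would view $\ell(t)=l$ as the event that the $l$th clock expires first, i.e.\ $\tau^l(t)<\tau^m(t)$ for every $m\neq l$. By conditional independence of the residual clocks,
\[
\mathbb{P}_{t,x,y}(\ell(t)=l)=\int_0^\infty f_{\tau^l}(s\mid x^l,y^l)\prod_{m\neq l}\mathbb{P}_{t,x,y}(\tau^m(t)>s)\,ds,
\]
and plugging in the marginals from the previous step reproduces the stated integral. A small but necessary check is that $\mathbb{P}_{t,x,y}(\tau^l(t)=\tau^m(t))=0$ for $l\neq m$, which follows from absolute continuity of each residual distribution under (A1)(ii)--(iii); this ensures the union bound used implicitly is actually an equality.

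For (ii), I would start from the joint event $\{\tau^l(t)\le v,\ \ell(t)=l\}$, apply the same competing-risks computation but truncate the $\tau^l$ integral at $v$, obtaining
\[
\mathbb{P}_{t,x,y}\bigl(\tau^l(t)\le v,\ \ell(t)=l\bigr)=\frac{1}{\prod_m(1-F^m(y^m|x^m))}\int_0^v\prod_{m\neq l}(1-F^m(s+y^m|x^m))\,f^l(s+y^l|x^l)\,ds,
\]
then divide by $\mathbb{P}_{t,x,y}(\ell(t)=l)$ from (i); the factors $\prod_m(1-F^m(y^m|x^m))$ cancel, producing the displayed $F_{\tau^l|l}(v\mid x,y)$. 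Differentiation under the integral yields $f_{\tau^l|l}(v\mid x,y)$, and a second differentiation is legitimate because $f^l$ and $F^m$ are $C^1$ in their continuous arguments (consequence of $\lambda^l_{ij}\in C^1$), so the smoothness claim and the existence of $f'_{\tau^l|l}$ follow.

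Finally, (iii) is a direct algebraic verification: multiplying $f_{\tau^l|l}(0\mid x,y)$ (from (ii) at $v=0$) by $\mathbb{P}_{t,x,y}(\ell(t)=l)$ (from (i)), the common integral in numerator and denominator cancels and the factors $\prod_{m\neq l}(1-F^m(y^m|x^m))$ collapse, leaving exactly $f^l(y^l|x^l)/(1-F^l(y^l|x^l))$, which coincides with $f_{\tau^l}(0\mid x^l,y^l)$ by the first paragraph. The main obstacle is not any single computation but the justification of conditional independence of the $\tau^l(t)$'s given $(X_t,Y_t)$; once this is articulated cleanly from the independence of the driving Poisson measures and the strong Markov property of each $Z^l=(X^l,Y^l)$, the rest is a competing-risks calculation.
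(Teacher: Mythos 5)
Your proposal is correct and follows essentially the same route as the paper: derive the residual-sojourn hazard $f_{\tau^l}(s\mid x^l,y^l)=f^l(s+y^l|x^l)/(1-F^l(y^l|x^l))$, then run the competing-risks calculation by conditioning on $\tau^l(t)$ and using the conditional independence of the clocks for (i) and its truncated version for (ii), with (iii) following by cancellation. Your added remarks on the null probability of ties and on tracing the independence back to the driving Poisson measures are sound refinements of the same argument rather than a different approach.
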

The proof can be found in the appendix.

\subsection{Asset price dynamics}	
 We assume that $r:\mathcal{X}^{n+1}\to [0,\infty),$ $\mu^l: [0,T]\times \mathcal{X}^{n+1}\to \mathbb{R}$, and $\si^l:[0,T]\times \mathcal{X}^{n+1}\to \mathbb{R}^n$ are continuous functions for each $l=1,\ldots,n$. We consider a frictionless market consisting of one locally risk free asset and $n$ risky assets which may be referred to as stocks. Let $S_t^0$ be the price of money market account, with floating interest rate $r(X_t)$ at time $t$. Therefore its value at time $t$ is given by
	\begin{equation}
	dS_t^0 = r(X_t) S_{t}^0 dt, \quad S_0^0 = 1.
	\end{equation}
	The prices of the $l^{\text{th}}$ stock governed by $X_t$ is given by the following stochastic differential equation
	\begin{align} \label{sde}
	dS_t^l & = S_{t}^l \left[ \mu^l(t,X_{t} )dt +\sum_{j=1}^{n} \sigma^l_j(t,X_{t}) ~dW_t^j \right]   \\
	S_0^l &= s_l, \quad s_l \geq 0,  \nonumber
	\end{align}	
	where $\{W_t^j\}_{t\geq 0}$ are $n$ independent standard Wiener processes defined on $(\Omega,\mathfrak{F},P)$ independent of $\{\wp^l\}_{l=0}^n$.
 Here $\mu^l$ and $\sigma^l=(\sigma^l_1,\ldots,\sigma^l_n)$ represent the growth rate and volatility coefficient of $l^{\text{th}}$ asset respectively. We define the volatility matrix $\si(t,x):=(\si^l_{l'}(t,x))_{ll'}$ with $\si^l(t,x)$ its $l^{\text{th}}$ row vector and we denote $ (S_t^1,\dots,S_t^n) $ by $ S_t $ .  Let $\{\mathfrak{F}_t\}_{t\geq 0}$ be the completion of filtration generated by $S_t,X_t$ satisfying the usual hypothesis. Let $a(t,x) := \sigma(t,x) \sigma(t,x)^{*}=\left(\sum_{i=1}^{n}\si^l_i(t,x)\si^{l'}_i(t,x)\right)_{ll'}$ denote the diffusion matrix, where $*$ denotes the transpose operation. Then $a(t,x)$ is continuous on $[0,T]$. \\
 	\textbf{Assumption}(A1)(v) We assume that $\si(t,x)$ is invertible for each $(t,x) \in [0,T]\times\mathcal{X}^{n+1}$. 	
 	
 	\noi The  SDE \eqref{sde} has a unique strong solution with positive continuous paths and is given by
	\begin{align}\label{sdeSol}
	S_{t}^l & = s_l\exp\left[ \Int_{0}^{t  } \!\left(\mu^l(u, X_{u}) - \dfrac{1}{2} a^{ll}(u,X_{u})\right)du + \sum\nolimits_{j=1}^{n} \Int_{0}^{t  }\! \sigma^l_j(u, X_{u})\,dW_u^j  \right],~ \text{for}~ l\geq 1.
	\end{align}
	
	Then from \eqref{sdeSol},
	    		\begin{eqnarray*}
	    		\ln\frac{S^l_{t+v}}{S_t^l}=\int_t^{t+v} (\mu^l(u,X_u)-\frac{1}{2}a^{ll}(u,X_u))\,du + \int_t^{t+v} \sum_{j=1}^{n}\si^l_{j}(t,X_t)dW^j_t.
	    		\end{eqnarray*}
	    		
	    		\noi We define $Z:=(Z^1,\ldots,Z^n)$, where for each $l=1,\ldots,n$, $Z^l:=\ln\frac{S^l_{t+v}}{S_t^l}$. Clearly the conditional distribution of $Z$ given $S_t=s,X_t=x,Y_t=y,\ell(t)=m,\tau^m(t)=v$ is  conditional normal with mean $\bar{z}:=(\bar{z}^1,\ldots,\bar{z}^n)$, where
	    		\begin{eqnarray}\label{normalmean}
	    		\bar {z}^l:=\int_t^{t+v} (\mu^l(u,x)-\frac{1}{2} a^{ll}(u,x))\,du,
	    		\end{eqnarray}
	    		 and covariance matrix $\Sigma$ with
	    	$\Sigma^{ll'}:=cov\left(Z^l,Z^{l'}\right)$. i.e
	    	  \begin{eqnarray}\label{covsigma}
	    		\no \Sigma^{ll'} &=& E\left[\int_t^{t+v} \sigma^l(u,X_u)\,dW_u\times \int_t^{t+v} \sigma^{l'}(u,X_u)\,dW_u\Big|S_t=s,X_t=x,Y_t=y,\ell(t)=m,\tau^m(t)=v\right]\\
	    		&= & \int_t^{t+v}a^{ll'}(u,x)du.
	    		\end{eqnarray}
	    		In \eqref{normalmean} and \eqref{covsigma}, we have used the fact that the process $X$ remains constant on $[t,t+v)$ provided $\ell(t)=m,\tau^m(t)=v$ hold for some $ m $. We summarize the above derivation in the following Lemma where, we use a function $\theta:(0,\infty)^n\times (0,\infty)\times (0,\infty)^n\times \mathcal{X}^{n+1}\times (0,\infty)\to \mathbb{R}$ given by
	    		\begin{equation}\label{expalpha}
	    		\theta(\vs;t,s,x,v):=\frac{1}{\sqrt{(2\pi)^n|\Sigma|}\vs_1\vs_2\ldots\vs_n}
	    		\exp\left(-\frac{1}{2}\sum_{ll'}\Sigma^{-1}_{ll'}(z^l-\bar{z}^l)(z^{l'}-\bar{z}^{l'})\right),
	    		\end{equation}
	    		where $|\Sigma|$ is the determinant of $\Sigma$, $z^l=\ln(\frac{\vs^l}{s^l})$ and $s\in (0,\infty)^n,t\geq 0,x\in \mathcal{X}^{n+1},v>0$ and $\Sigma^{-1}_{ll'}$ is the $ll'$th element of $\Sigma^{-1}$ for $l=1,\ldots,n$.

   \begin{lem}\label{dists} If $S_t$ satisfies \eqref{sde}, then for any $v>0, t\geq 0$,

   \begin{itemize}
   \item[(i)]  $P\left(\left(\frac{S^l_{t+v}}{S^l_t}\leq \vs_l\right)_{l=1,\ldots, n}\bigg|S_t=s,X_t=x,Y_t=y,\ell(t)=m,\tau^m(t)=v\right)=\int_{\prod_{l=1}^{n}\left(0,\vs_l\right)}\theta(r;t,s,x,v)dr,$
   \item[(ii)] the conditional expectation is given by
   \begin{eqnarray*}
    E\left[\frac{S^l_{t+v}}{S^l_t} \bigg|S_t=s,X_t=x,Y_t=y,\ell(t)=m,\tau^m(t)=v\right]=e^{\int_{t}^{t+v}\mu^l(u,x)du},
    \end{eqnarray*}	
    \item[(iii)] the conditional covariance is given by $$cov\left(\frac{S^l_{t+v}}{S^l_t},\frac{S^{l'}_{t+v}}{S^{l'}_t}\bigg|S_t=s,X_t=x,Y_t=y,\ell(t)=m,\tau^m(t)=v\right)=e^{\int_{t}^{t+v}\left(\mu^l(u,x)+\mu^{l'}(u,x)\right)du}\left(e^{\int_{t}^{t+v}a^{ll'}(u,x)du}-1\right).$$
   \end{itemize}
  	    		
  	\end{lem}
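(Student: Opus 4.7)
The plan hinges on the observation, already recorded in the paragraph preceding the lemma, that conditional on $\ell(t)=m$ and $\tau^m(t)=v$ the CSM $X$ stays frozen at $x$ throughout $[t,t+v)$: by definition $\ell(t)$ is the component of $X$ in which the first post-$t$ transition occurs and $\tau^m(t)$ is the waiting time until that transition of $X^m$, so no component of $X$ jumps during $[t,t+v)$. Combined with the independence of $W$ from the driving Poisson measures $\{\wp^l\}$, this allows me to replace $X_u$ by $x$ in \eqref{sdeSol}, so
\begin{equation*}
Z^l=\ln\frac{S^l_{t+v}}{S^l_t}=\int_t^{t+v}\!\Big(\mu^l(u,x)-\tfrac12 a^{ll}(u,x)\Big)\,du+\sum_{j=1}^n\int_t^{t+v}\!\sigma^l_j(u,x)\,dW^j_u
\end{equation*}
under the conditioning $S_t=s,\,X_t=x,\,Y_t=y,\,\ell(t)=m,\,\tau^m(t)=v$.

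Each stochastic integrand is now a deterministic continuous function of $u$, so the vector of Itô integrals is jointly centred Gaussian and independent of $\mathfrak{F}_t$ as well as of the conditioning event (which is measurable with respect to $\sigma(X,Y)$, independent of $W$). Hence the conditional law of $(Z^1,\ldots,Z^n)$ is multivariate normal with mean vector $\bar z$ as in \eqref{normalmean} and, by Itô isometry together with $d\langle W^j,W^{j'}\rangle_u=\delta_{jj'}\,du$, covariance matrix $\Sigma$ as in \eqref{covsigma}.

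With this conditional Gaussian law in hand, (i) follows by the change of variables $\vs^l\mapsto z^l=\ln(\vs^l/s^l)$ applied to the density of $Z$: the Jacobian $\prod_l 1/\vs^l$ produces exactly the prefactor in \eqref{expalpha}. For (ii) I invoke the log-normal moment identity $E[e^{Z^l}]=\exp(\bar z^l+\tfrac12\Sigma^{ll})$; the term $-\tfrac12\int_t^{t+v} a^{ll}(u,x)\,du$ hidden in $\bar z^l$ exactly cancels $\tfrac12\Sigma^{ll}=\tfrac12\int_t^{t+v} a^{ll}(u,x)\,du$, leaving $\exp(\int_t^{t+v}\mu^l(u,x)\,du)$. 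For (iii) I apply the bivariate log-normal covariance identity
\begin{equation*}
\mathrm{cov}(e^{Z^l},e^{Z^{l'}})=\exp\!\Big(\bar z^l+\bar z^{l'}+\tfrac12\Sigma^{ll}+\tfrac12\Sigma^{l'l'}\Big)\bigl(e^{\Sigma^{ll'}}-1\bigr),
\end{equation*}
and substitute the explicit expressions for $\bar z$ and $\Sigma$; the drift corrections again telescope to yield the claimed formula.

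The only delicate point is that $\{\tau^m(t)=v\}$ is a null event, so the conditioning must be interpreted via the regular conditional probability built from the density $f_{\tau^l|l}(\cdot\mid x,y)$ established in Lemma \ref{ptxyltl}. Since $W$ is constructed independently of $(X,Y)$ and hence of $(\ell(t),\tau^m(t))$, this conditioning does not alter the law of the Brownian increments on $[t,t+v)$; the three assertions then follow from standard log-normal computations as outlined.
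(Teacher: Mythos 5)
Your proposal is correct and follows essentially the same route as the paper: the paper's ``proof'' is precisely the derivation in the paragraph preceding the lemma, which freezes $X$ at $x$ on $[t,t+v)$ given $\ell(t)=m$, $\tau^m(t)=v$, concludes that the log-returns are conditionally Gaussian with mean $\bar z$ and covariance $\Sigma$, and then reads off the three parts from the induced log-normal law. Your additional remarks on the Jacobian in (i), the cancellation of the $-\tfrac12 a^{ll}$ drift correction in (ii)--(iii), and the null-event conditioning are consistent elaborations of the same argument.
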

  		\begin{lem}\label{Lem1}
  			Let $\{ S^l_t\}_{t\ge 0}$ be as in \eqref{sde} and $\{\mathcal{F}^X_t\}_{t\geq 0}$ be the filtration generated by $X$.
  			\begin{itemize}
  			\item[(i)] Then for each $l=1,\ldots,n$, and $t\geq 0,$
  			  			\begin{equation*}
  			  			E\left[  S^l_t \bigg|\mathcal{F}^X_t\right]\leq s_l e^{\int_0^t \mu^l(u,X_u)du}.
  			  			\end{equation*}
  			\item[(ii)] For each $l$, $E\left({S^l_t}^2\bigg| \mathcal{F}^X_t \right)<\infty$ for all $t$.
  			\end{itemize}
  		\end{lem}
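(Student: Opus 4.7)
The plan is to exploit the explicit representation \eqref{sdeSol} together with the fact that, by construction, the Brownian motions $W^1,\ldots,W^n$ are independent of the Poisson random measures $\wp^0,\ldots,\wp^n$ driving $X$. Consequently $W$ is independent of $\mathcal{F}^X_t$, and conditioning on $\mathcal{F}^X_t$ effectively freezes the path of $X$ on $[0,t]$, so the only remaining randomness in $S^l_t$ is through the Wiener integrals of deterministic (once $X$ is fixed) integrands.

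From \eqref{sdeSol} write $S^l_t = s_l\exp(A_t + M_t)$, where $A_t := \int_0^t(\mu^l(u,X_u)-\tfrac12 a^{ll}(u,X_u))du$ is $\mathcal{F}^X_t$-measurable and $M_t := \sum_{j=1}^n\int_0^t \sigma^l_j(u,X_u)\,dW^j_u$. Conditional on $\mathcal{F}^X_t$, the integrand $u\mapsto \sigma^l_j(u,X_u(\omega))$ is a deterministic function of $u$, bounded by continuity of $\sigma$ on the compact set $[0,T]\times \mathcal{X}^{n+1}$. Since $W$ is independent of $\mathcal{F}^X_t$, the Wiener integral $M_t$ is conditionally a centered Gaussian random variable with variance $\int_0^t a^{ll}(u,X_u)\,du$. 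Applying the Gaussian moment generating function yields
\[ E\bigl[e^{M_t}\mid \mathcal{F}^X_t\bigr] = \exp\Bigl(\tfrac12\int_0^t a^{ll}(u,X_u)\,du\Bigr), \]
and then pulling out the $\mathcal{F}^X_t$-measurable factor $e^{A_t}$ gives $E[S^l_t\mid\mathcal{F}^X_t] = s_l\exp\bigl(\int_0^t\mu^l(u,X_u)\,du\bigr)$. This is actually an equality and therefore satisfies the $\le$ bound stated in (i).

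For (ii) the same argument applied to $(S^l_t)^2 = s_l^2\exp(2A_t + 2M_t)$, using $E[e^{2M_t}\mid\mathcal{F}^X_t] = \exp\bigl(2\int_0^t a^{ll}(u,X_u)\,du\bigr)$, produces $E[(S^l_t)^2\mid\mathcal{F}^X_t] = s_l^2\exp\bigl(\int_0^t(2\mu^l(u,X_u) + a^{ll}(u,X_u))\,du\bigr)$. Since $\mu^l$ and $a^{ll}$ are continuous on the compact set $[0,T]\times\mathcal{X}^{n+1}$, they are uniformly bounded, so the right-hand side is a.s. finite. The only slightly delicate point throughout is the identification of $M_t$ as a conditionally Gaussian integral; this follows from the independence of $W$ and $X$ via a regular-conditional-distribution argument (or by approximating the integrand with simple processes and passing to the limit in $L^2$), but it is standard and not a real obstacle.
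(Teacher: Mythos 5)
Your proof is correct, and the stronger conclusion you reach (equality in (i), and the explicit formula $s_l^2\exp\bigl(\int_0^t(2\mu^l+a^{ll})(u,X_u)\,du\bigr)$ in (ii)) is genuinely what holds; the stated $\le$ is just a weaker consequence. However, your route differs from the paper's. The paper writes $S^l_t/S^l_0$ as the telescoping product $\prod_{i}S^l_{T^l_i\wedge t}/S^l_{T^l_{i-1}\wedge t}$ over the transition times of $X^l$, passes the conditional expectation inside the limit via Fatou's lemma (which is the sole reason the lemma is stated with an inequality), and then evaluates each factor using conditional independence of the ratios together with Lemma 2.2(ii) (resp.\ (iii) for part (ii)). You instead condition the closed-form solution \eqref{sdeSol} on the whole path of $X$ up to $t$: since $W$ is independent of the Poisson random measures (and of $X_0$), the Wiener integral is conditionally a centered Gaussian with variance $\int_0^t a^{ll}(u,X_u)\,du$, and the Gaussian moment generating function finishes the computation. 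Your approach is shorter, avoids Fatou and the conditional-independence step, and gives the exact conditional moments; its only cost is the ``freezing'' lemma identifying the conditional law of the stochastic integral given $\mathcal{F}^X_\infty$, which you correctly flag and which is standard (simple-process approximation plus $L^2$ limits, using that the integrand is bounded because $\sigma$ is continuous on the compact set $[0,T]\times\mathcal{X}^{n+1}$ with $\mathcal{X}$ finite). The paper's piecewise argument sidesteps that technicality by working only on intervals where $X$ is constant, at the price of an infinite product, Fatou, and a conditional-independence assertion that your argument does not need.
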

  		\proof (i) Let $T^l_i$ be as in section 2.1,
  		\begin{eqnarray*}
  		E\left[ \frac{ S^l_t}{S^l_0} \bigg|\mathcal{F}^X_t\right] &=& E\left[\prod_{i=1}^{\infty} \frac{ S^l_{T^l_i\wedge t}}{ S^l_{T^l_{i-1}\wedge t}} \bigg|\mathcal{F}^X_t\right] \\
  		&=& E\left[\lim\limits_{N\to \infty}\prod_{i=1}^{N} \frac{ S^l_{T^l_i\wedge t}}{ S^l_{T^l_{i-1}\wedge t}} \bigg|\mathcal{F}^X_t\right] \\
  		&\leq&\varliminf_{N\to \infty} E\left[\prod_{i=1}^{N} \frac{ S^l_{T^l_i\wedge t}}{ S^l_{T^l_{i-1}\wedge t}} \bigg|\mathcal{F}^X_t\right],
  		\end{eqnarray*}
  		by Fatou's lemma. Now since for each $i=1,\ldots, n$, $\frac{ S^l_{T^l_i\wedge t}}{S^l_{T^l_{i-1}\wedge t}}$ are conditionally independent to each other given time $t$, and using Lemma \eqref{dists}(ii) the above limit can be rewritten as
  		$ \varliminf\limits_{N\to \infty} \prod_{i=1}^{N}e^{\int_{T^l_{i-1}\wedge t}^{T^l_i\wedge t}\mu^l(u,X_u)du},$ which is same as $e^{\int_0^t r(X_u)du}$.
  		
  		(ii) In a similar line of proof (i), using  Lemma \eqref{dists}(iii), the proof follows.
  	 \qed

  We denote the joint process $(\hat{S}^1_t,\ldots,\hat{S}^n_t)$ by $\hat{S}_t$, where $\hat{S}_t^l$ is given by $(S_t^0)^{-1}S_t^l$ and represents the discounted $l^{\text{th}}$ stock price. For each $l$,
	\begin{align}
	d\hat{S}_t^l & = \hat{S}_{t}^l \left[ \sum_{j=1}^{n} \sigma^l_j(t,X_{t}) ~dW_t^j +\left(\mu^l(t,X_{t} )-r(X_t)\right)dt\right],\label{8}
	\end{align}
	with $\hat{S}_0^l=s_l$.
	
	 To show that the market is arbitrage free under admissible strategy, we seek existence of an equivalent martingale measure (\cite{KK}, Th. 7.1). 	
	Consider $\gamma_l(t,x):=\sum_{j=1}^{n}{(\sigma^{-1}(t,x))}^l_j\left(\mu^j(t,x)-r(x) \right)$ for each $l=1,\ldots,n$. Under (A1)(v)and the continuity assumption on parameters, the Novikov's condition (\cite{IKW}, Th. 5.3) holds, i.e., for every $t\in [0,T],$
	$$E\bigg[\exp\left(\frac{1}{2}\sum_{l=1}^{n}\int_0^t\gamma^2_l(u,X_u)\,du\right)\bigg]<\infty.$$ Hence
	\begin{align*}
	\mathcal{Z}_t:=\exp\left(-\sum_{l=1}^{n} \int_0^t\gamma_l(u,X_u)\,dW^j_u-\frac{1}{2}\sum_{l=1}^{n}\int_0^t\gamma^2_l(u,X_u)\,du\right),
	\end{align*}
	is a square integrable martingale and $E\mathcal{Z}_T=1$. Consider an equivalent measure $P^*$ defined by $dP^*=\mathcal{Z}_TdP$. It is easy to check that $P^*$ is a probability measure. Hence by Girsanov's Theorem (\cite{KK}, Th. 5.5) $\bar W_t$ is a Wiener process under the probability measure $P^*$, where $\bar W^l_t=W^l_t+\int_{0}^{t}\gamma_l(u,X_u)du$. Thus \eqref{sde} becomes
	\begin{align*}
	dS_t^l & = S_{t}^l \left[ r(X_{t} )dt +\sum_{j=1}^{n} \sigma^l_{j}(t,X_{t}) ~d\bar W_t^j \right].
	\end{align*}
	Therefore under $P^*$, the discounted stock price $\hat{S}_t^l$ is a martingale and hence $P^*$ is an equivalent martingale measure. This proves that the market has no arbitrage under admissible strategies. The class of admissible strategy is presented in the next section.


	\section{Pricing Approach}\label{section for pricing approach}
	If $\xi^l_t$ denotes the number of units invested in the $l^{\text{th}}$ stock at time $t$ and $\vp_t$ denotes the number of units of the risk free asset, then $\pi=\{\pi_t=(\xi_t,\vp_t)\}_{t\in [0,T]}$ is called a portfolio strategy. For $t\in [0,T]$, $V_t(\pi) := \sum_{l=1}^n\xi^l_t S^l_t + \varepsilon_tS^0_t$ is said to be value process of the portfolio and the discounted value process is given by
	\begin{equation*}
	\hat{V}_t(\pi)=\xi_t \hat{S}_t +\vp_t.
	\end{equation*}
	\begin{defi}
A portfolio strategy $\pi=\{\pi_t=(\xi_t,\varepsilon_t), 0\leq t\leq T\}$ is called \emph{admissible} if it satisfies the following conditions
	\begin{itemize}
		\item[(i)] $\xi_t=(\xi^1_t,\ldots,\xi^n_t)$ is an $n$-dimensional predictable process and for each $l=1,\ldots,n$,
 $$\sum_{ll'}\int_0^T {\xi_t^l}{S^l_t} a^{ll'}(t,X_t)S^{l'}_t\xi^{l'}_t dt<\infty.$$
		\item[(ii)] $\vp$ is adapted, and $E(\varepsilon^2_t)<\infty~\forall~ t\in [0,T]$.
		\item[(iii)] $P(\hat{V}_t(\pi)\geq -a,~\forall~ t)=1$ for some positive $a$.
	\end{itemize}
	\end{defi}

 An \emph{admissible} strategy $\pi$ is called hedging strategy for an $\mathfrak{F}_T$ measurable claim $H$ if $V_T(\pi)=H$. For example, the claim associated to a European call option on $S^1$ is $H=(S^1_T-K)^+$, where $K$ is the strike price and $T$ is the maturity time. To price and hedge an option, an	 investor prefers an admissible hedging strategy which requires minimal amount of additional cash flow. In \cite{FS} the notion of ``optimal strtegy" is developed based on this idea. There the initial capital is referred as locally risk-minimizing price of the option. It is shown in \cite{FS} that if the market is arbitrage free, the existence of an optimal strategy for hedging a claim $H$, is equivalent to the existence of F\"{o}llmer Schweizer decomposition of discounted claim $\hat{H}:= {S^0_T}^{-1} H$ in the form
	\begin{equation}\label{eq1}
	\hat{H}=H_0+\sum_{l=1}^{n}\int^{T}_{0}{\xi_t^{\hat{H}}(l)}d{\hat{S}^l_t}+L^{\hat{H}}_T,
	\end{equation}
	where $H_0\in L^2(\Omega,\mathfrak{F}_0,P), L^{\hat{H}}=\{L^{\hat{H}}_t\}_{0\leq t\leq T}$ is a square integrable martingale starting with zero and orthogonal to the martingale part of $S_t$, and ${\xi_t^{\hat{H}}(l)}=(\xi_t^{\hat{H}}(1),\ldots,\xi_t^{\hat{H}}(n))$ satisfies A2 (i). Further ${\xi^{\hat{H}}(l)}$, appeared in the decomposition, constitutes the optimal strategy. Indeed the optimal strategy $\pi=(\xi_t,\varepsilon_t)$ is given by
	\begin{align}\label{11a}
	\nonumber \xi_t :=& \xi^{\hat{H}}_t,\\
	\hat{V}_t :=& H_0+\sum_{l=1}^{n}\int^{t}_{0}{\xi^l_u}\,d{\hat{S}^l_u}+L^{\hat{H}}_t,\\
	\nonumber \varepsilon_t :=& \hat{V}_t-\sum_{l=1}^{n}\xi^l_t {\hat{S}^l_t},
	\end{align}
	 and $S^0_t \hat{V}_t$ represents the \emph{locally risk minimizing price} at time $t$ of the claim $H$. Thus the F\"{o}llmer Schweizer decomposition is the key thing to settle the pricing and hedging problems in a given market. We refer to \cite{S2} for more details. In this paper we are interested to price a special class of contingent claims, of the form $H=K(S_T)$, where we make the following assumptions on $K:\mathbb{R}^n_+\to \mathbb{R}_+$.
	
	 \noi  \textbf{Assumptions} (A2):
	 	  \begin{itemize}
	 	  \item[(i)] $K(s)$ is Lipschitz continuous function.
	 	  \item[(ii)] There exists $c_1\in \mathbb{R}^n,~ \text{and}~ c_2>0$ such that $|K(s)-c_1^*s|<c_2$ for all $s\in\mathbb{R}^n_+$.
	 	  \end{itemize}  This class includes claims of all types of basket options consisting finitely many vanilla options. As an example, a typical basket call option has a claim $(\sum_{l=1}^n c_lS^l_t-\bar{K})^+$, where $\bar{K}$ is the strike price.

	  Our primary goal in this paper is to obtain expressions for locally risk-minimizing price process and the optimal strategy corresponding to a claim $K(S_T)$.
	  To this end we study the Cauchy problem \eqref{mainpde}-\eqref{pdeboundary} and obtain expressions of price and hedging using solution of \eqref{mainpde}-\eqref{pdeboundary}. We state this result as theorems at the end of this section. But before that we introduce some notation and definition. We define a linear operator $$D_{t,y}\vf(t,s,x,y):=\displaystyle\lim_{\vp\rightarrow 0}\frac{1}{\vp}\{\vf(t+\vp,s,x,y+\vp\mathbf{1})-
\vf(t,s,x,y)\},$$
where dom($D_{t,y}$), the domain of $D_{t,y}$, contains all measurable functions $\vf$ on $\mathcal{D}$ such that above limit exists for every $(t,s,x,y)\in \mathcal{D}$. We rewrite \eqref{mainpde} using the above notation
	\begin{align}\label{p1}
 \no & D_{t,y} \varphi(t, s, x, y) +r(x)\sum_{l=1}^n s^l \frac{\pa \varphi}{\pa s^l}(t, s, x, y)+\frac{1}{2}\sum_{l=1}^{n}\sum_{l'=1}^{n}a^{ll'}(t,x)s^ls^{l'}\frac{\partial^2\vf}{\partial s^{l}\pa s^{l'}}(t,s,x,y)  \\
 	&+ \sum_{l=0}^n \sum_{j\neq x^l}\lambda_{x^lj}^l(y^l)\left[ \vf(t,s,R^l_j x,R^l_0 y)-\vf(t,s,x,y) \right] = r( x) ~ \varphi(t, s, x, y).
	\end{align}
 Now we define the meaning of classical solution of the PDE.
\begin{Definition}\label{defcl}
		We say, $\varphi:\mathcal{D}\rightarrow \mathbb{R}$ is a classical solution of \eqref{p1}-\eqref{pdeboundary} if $\varphi\in$ \emph{dom}$(D_{t,y})$, twice differentiable with respect to $s$ and for all $(t,s,x,y)\in \mathcal{D}$, \eqref{p1}-\eqref{pdeboundary} are satisfied.
	\end{Definition}
	
	\begin{theo}\label{existence theorem}
		Under assumptions \emph{(A1)(i)-(v)}, the initial value problem \eqref{p1}-\eqref{pdeboundary} has a unique classical solution in the class of functions with at most linear growth.
	\end{theo}
	
We establish this at the end of section \ref{section for pde }. We present the locally risk-minimizing strategy in terms of the solution to the PDE \eqref{p1}-\eqref{pdeboundary}. The proof of the following Theorem is deferred to Section \ref{section for pricing}.
	\begin{theo}\label{theo5} Let $\vf$ be the unique classical solution of \eqref{p1}-\eqref{pdeboundary} in the class of functions with at most linear growth and $(\xi,\varepsilon)$  be given by
				\begin{equation}\label{VI3.20}
				\xi^l_t :=\frac{\partial\varphi}{\partial s^l}(t,S_t,X_{t-},Y_{t-})~\forall~l=1,\ldots,n, \text{ and } \varepsilon_{t} := e^{-\int_{0}^{t}r(X_{u})du} \left(\varphi(t,S_t,X_{t},Y_{t})-\sum_{l=1}^{n}\xi^l_{t}S^l_{t}\right).
				\end{equation}
				Then
	\begin{enumerate}
	\item
		 $(\xi,\varepsilon)$ is the optimal admissible strategy,
   \item $\varphi(t,S_t,X_t,Y_t)$ is the locally risk minimizing price of the claim $K(S_T)$ at time $t$.
	
	\end{enumerate}
	\end{theo}
	
	In order to study the well-posedness of solution of the PDE \eqref{p1}-\eqref{pdeboundary}, we study a Volterra integral equation of second kind. We prepare ourself by showing the existence and uniqueness of solution of the integral equation in the next section.
	

	\section{Volterra Integral Equation}\label{existenceofsol}
	For each $x$, consider the following Cauchy problem which is known as B-S-M PDE
		\begin{align}\label{black scholes}
		\frac{\partial\rho_{x}(t,s)}{\partial t} + r(x)\sum_{l=1}^{n} s^l\frac{\partial\rho_{x}(t,s)}{\partial s^l}+\frac{1}{2}\sum_{l=1}^{n} \sum_{l'=1}^{n}a^{ll'}(t,x)s^ls^{l'} \frac{\partial^2\rho_{x}(t,s)}{\partial s^l \pa s^{l'}} = r(x) \rho_{x}(t,s)
		\end{align}
		for $(t,s)\in (0,T)\times (0,\infty)^n$ and $\rho_{i}(T,s)=K(s)$, for all $s\geq 0$. This has a classical solution with at most
		linear growth (see \cite{KK}), provided $K$ is of at most linear growth.
	We would like to mention that $\rho_x$ is infinitely many times differentiable with respect to $s$.

		For $ \zeta\in\RR^n $, let $ \|\zeta\|_1 $ denote the norm
$\sum_{l=1}^{n}|\zeta^l| $. Let
\begin{eqnarray}\label{classB}
 \mathcal{B}:=\{\vf:\bar{\mathcal{D}}\to \mathbb{R}, \mathrm{measurable}\mid \|\vf\|_L:=\sup_{\bar{\mathcal{D}}}\frac{|\vf(t,s,x,y)|}{1+\|s\|_1}<\infty \} .
\end{eqnarray}
Let $C^2_s(\mathcal{D}):=C^{0,2,0}(\mathcal{D})$ be the set of all measurable functions on $\mathcal{D}$, which are also twice differentiable with respect to $s$.

     Let $\Sigma$ be an $n\times n$ matrix, whose elements are as in \eqref{covsigma}. We further use the notation $\Sigma$, $|\Sigma|$ and $\Sigma^{-1}$ as in \eqref{expalpha}. By replacing $\mu^l(u,x)$ by $r(x)$ in \eqref{normalmean}, we define a function $\alpha:(0,\infty)^n\times (0,\infty)\times (0,\infty)^n\times \mathcal{X}^{n+1}\times (0,\infty)\to \mathbb{R}$ by
     	\begin{equation}\label{newalpha}
         		\alpha(\vs;t,s,x,v)=\frac{1}{\sqrt{(2\pi)^n|\Sigma|}\vs_1\vs_2\ldots\vs_n}
         		\exp\left(-\frac{1}{2}\sum_{ll'}\Sigma^{-1}_{ll'}(z^l-\bar{z}^l)(z^{l'}-\bar{z}^{l'})\right),
         		\end{equation}
         		where $z^l=\ln(\frac{\vs^l}{s^l})$, and $\bar{z}^l:=\int_t^{t+v} (r(x)-\frac{1}{2}a^{ll}(u,x))\,du $ for $s\in (0,\infty)^n,t\geq 0,x\in \mathcal{X}^{n+1},v>0$ and $\Sigma^{-1}_{ll'}$ is the $ll'$th element of $\Sigma^{-1}$ for $l=1,\ldots,n$. It is clear from \eqref{newalpha} that $\alpha(\vs;t,s,x,v)$ is a log-normal density with respect to $\vs$ variable for a fixed $(t,s,x,v)$. The mean of the corresponding log-normal distribution is $e^{\int_{t}^{t+v}r(x)du}=e^{r(x)v}$.

    Consider the following integral equation
     	\begin{align}\label{main integral eqn}
     		\varphi(t,s,x,y)=&\no \sum_{l=0}^{n} P_{t,x,y}(\ell(t)=l) \left( \rho_x(t,s)\left(1-F_{\tau^l\mid l}(T-t\mid x,y)\right)+ \int_0^{T-t}e^{-r(x) v} f_{\tau^l\mid l}(v\mid x,y)\times\right. \\
     		&\sum_{j\neq x^l}p^l_{x^l j}(y^l+v)\left.\int_{\mathbb{R}_+^n} \vf\left(t+v,\vs,R^l_jx,R^l_0(y+v\mathbf{1})
     		\right)\alpha(\vs;t,s,x,v)\,d\vs\,dv \right).
     		\end{align}
     	\begin{lem}\label{integralsol}
     				The integral equation \eqref{main integral eqn} has a unique solution in $\mathcal{B}$ \em(as in \eqref{classB}).
     				
     	\end{lem}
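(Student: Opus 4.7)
The plan is to recast \eqref{main integral eqn} as the fixed-point equation $\vf = \mathcal{T}\vf$ on $\mathcal{B}$, where $\mathcal{T}\vf(t,s,x,y)$ equals the right-hand side of \eqref{main integral eqn}, and to apply the Banach fixed point theorem. Since the equation is of Volterra type in the time variable (the inner integral runs over $v\in [0,T-t]$), the natural device is to work with a weighted sup norm. I would equip $\mathcal{B}$ with
\[
\|\vf\|_\gamma := \sup_{\bar{\mathcal{D}}}\frac{e^{-\gamma(T-t)}|\vf(t,s,x,y)|}{1+\|s\|_1}, \qquad \gamma>0,
\]
which is equivalent to $\|\cdot\|_L$ since $e^{-\gamma T}\|\vf\|_L \le \|\vf\|_\gamma \le \|\vf\|_L$; in particular $(\mathcal{B},\|\cdot\|_\gamma)$ is complete.

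Two calculations drive the estimates. First, because $K$ is of at most linear growth by (A2)(ii), so is $\rho_x(t,\cdot)$ uniformly in $(t,x)\in [0,T]\times\mathcal{X}^{n+1}$. Second, $\alpha(\cdot;t,s,x,v)$ is a product log-normal density whose $l$th marginal has mean $s^l e^{r(x)v}$, giving
\[
\int_{\mathbb{R}_+^n}(1+\|\vs\|_1)\alpha(\vs;t,s,x,v)\,d\vs = 1+\|s\|_1 e^{r(x)v}.
\]
Combining these with $\sum_{j\neq x^l}p^l_{x^lj}(y^l+v)=1$, $\sum_{l=0}^n P_{t,x,y}(\ell(t)=l)=1$, and $\int_0^{T-t}f_{\tau^l\mid l}(v\mid x,y)\,dv\le 1$ (all from Lemma \ref{ptxyltl}), a direct estimate first shows $\mathcal{T}(\mathcal{B})\subset \mathcal{B}$. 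The same ingredients, applied to $\mathcal{T}\vf_1-\mathcal{T}\vf_2$ (in which the $\rho_x$ term cancels) together with $e^{-r(x)v}+\|s\|_1\le 1+\|s\|_1$, yield after dividing by $(1+\|s\|_1)e^{\gamma(T-t)}$:
\[
\|\mathcal{T}\vf_1-\mathcal{T}\vf_2\|_\gamma \le \kappa(\gamma)\,\|\vf_1-\vf_2\|_\gamma,\qquad \kappa(\gamma):=\sup_{(t,x,y)\in\bar{\mathcal{D}}}\sum_{l=0}^{n}P_{t,x,y}(\ell(t)=l)\int_0^{T-t}e^{-\gamma v}f_{\tau^l\mid l}(v\mid x,y)\,dv.
\]

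The main obstacle is controlling $\kappa(\gamma)$. I would argue that $f_{\tau^l\mid l}(v\mid x,y)$ is uniformly bounded in $(v,x,y)$ over the relevant range: by the explicit formula in Lemma \ref{ptxyltl}(ii), the numerator $\prod_{m\neq l}(1-F^m(v+y^m\mid x^m))f^l(v+y^l\mid x^l)$ is continuous and hence bounded on the compact set $v+y^l\in[0,T]$, while the denominator is continuous and strictly positive in $(x,y)$, hence bounded away from zero on the compact parameter set $\mathcal{X}^{n+1}\times [0,T]^{n+1}$ (using (A1)(ii)--(iii) and the positivity of $\la^l$). Consequently $\kappa(\gamma)\le M\int_0^\infty e^{-\gamma v}\,dv = M/\gamma$ for some finite $M$, so any $\gamma_0 > M$ makes $\mathcal{T}$ a strict contraction on $(\mathcal{B},\|\cdot\|_{\gamma_0})$, and the Banach fixed point theorem delivers a unique solution in $\mathcal{B}$.
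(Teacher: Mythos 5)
Your proposal is correct and reaches the same destination by the same overall strategy as the paper --- rewrite \eqref{main integral eqn} as a fixed-point equation for an operator on $\mathcal{B}$ and invoke the Banach fixed point theorem --- but the way you obtain the contraction is genuinely different. The paper works directly with the norm $\|\cdot\|_L$ and estimates the Lipschitz constant by
$J\le \sup\big|\sum_{l}P_{t,x,y}(\ell(t)=l)F_{\tau^l\mid l}(T-t\mid x,y)\big|$, then asserts $J<1$ from $F_{\tau^l\mid l}(T-t\mid x,y)<1$ pointwise; strictly speaking that last step needs an extra word (a supremum of quantities each less than $1$ need not be less than $1$), though it can be repaired by continuity of $F_{\tau^l\mid l}$ and compactness of $\mathcal{X}^{n+1}\times[0,T]^{n+1}$ and of $t\in[0,T]$. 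You instead use a Bielecki-type weighted norm $\|\cdot\|_\gamma$, equivalent to $\|\cdot\|_L$, and show $\kappa(\gamma)\le M/\gamma$ by bounding $f_{\tau^l\mid l}$ uniformly; your justification of that bound (continuity and strict positivity of the denominator in Lemma \ref{ptxyltl}(ii) on the compact parameter set, via (A1)(ii)--(iii) and $\la^l_{ij}>0$) is sound, and your intermediate computations --- the identity $\int_{\mathbb{R}_+^n}(1+\|\vs\|_1)\alpha(\vs;t,s,x,v)\,d\vs=1+\|s\|_1e^{r(x)v}$, the cancellation of the $\rho_x$ term in the difference, and the inequality $e^{-r(x)v}+\|s\|_1\le 1+\|s\|_1$ --- match the paper's. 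What your route buys is robustness: it sidesteps the delicate strict-inequality-through-supremum issue entirely and would survive even if $F_{\tau^l\mid l}(T-t\mid x,y)$ could approach $1$; what it costs is the extra (but easily verified) uniform bound on the conditional density. Both are valid; yours is, if anything, the more careful argument.
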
	
     	\begin{proof}
     	 We first note that a solution of \eqref{main integral eqn} is a fixed point of the operator $A$ and vice versa, where
     		\begin{align*}
     		\no A \varphi(t,s,x,y)&:=\no \sum_{l=0}^{n} P_{t,x,y}(\ell(t)=l) \left( \rho_x(t,s)\left(1-F_{\tau^l\mid l}(T-t\mid x,y)\right) + \int_0^{T-t}e^{-r(x) v} f_{\tau^l\mid l}(v\mid x,y)\times\right. \\
     		&\sum_{j\neq x^l}p^l_{x^l j}(y^l+v)\left.\int_{\mathbb{R}_+^n} \vf\left(t+v,\vs,R^l_jx,R^l_0(y+v\mathbf{1})
     		\right)\alpha(\vs;t,s,x,v)\,d\vs\,dv \right).
     		\end{align*}
     		
     		\noi  It is simple to verify that for each $\vf\in \mathcal{B}$, $A\vf: \bar{\mathcal{D}}\to \mathbb{R}$ is measurable. To prove that $A$ is a contraction in $\mathcal{B}$, we need to show that for $\vf_1,\vf_2\in \mathcal{B}$, $\|A\varphi_1-A\varphi_2\|_L \leq J\|\varphi_1-\varphi_2\|_L$ where $J<1$. In order to show existence and uniqueness in the prescribed class, it is sufficient to show that $A$ is a contraction in $\mathcal{B}$. Then the Banach fixed point Theorem ensures existence and uniqueness of the fixed point in $\mathcal{B}$. To show that for $\vf_1,\vf_2\in \mathcal{B}$, $\|A\varphi_1-A\varphi_2\|_L \leq J\|\varphi_1-\varphi_2\|_L$ where $J<1$, we compute
     		\begin{align*}
     		\|A\varphi_1-A\varphi_2\|_L=&\sup_{\mathcal{D}}\bigg|\frac{ A\varphi_1-A\varphi_2}{1+\|s\|_1}\bigg|\\
     		=&\sup_{\mathcal{D}}\bigg| \sum_{l=0}^{n}P_{t,x,y}(\ell(t)=l) \int_0^{T-t}e^{-r(x) v} f_{\tau^l\mid l}(v\mid x,y)\sum_{j^l\neq x^l}p^l_{x^l j^l}(y^l+v)\times\bigg.\\
     		&\bigg.\int_{\RR^n_+}(\vf_1-\vf_2)(t+v,\vs,R^l_jx,R^l_0(y+v\mathbf{1}))\frac{\alpha(\vs;t,s,x,v)}{1+\|s\|_1}\,d\vs\,dv  \bigg|\\
     		\leq & \sup_{\mathcal{D}}\bigg| \sum_{l=0}^{n}P_{t,x,y}(\ell(t)=l) \int_0^{T-t}e^{-r(x) v} f_{\tau^l\mid l}(v\mid x,y)\sum_{j^l\neq x^l}p^l_{x^l j^l}(y^l+v)\times\bigg.\\
     		&\bigg.\int_{\RR^n_+}(1+\|\vs\|_1)\sup_{(t',\vs',x',y')\in\mathcal{D}}\left[\frac{(\vf_1-\vf_2)(t',\vs',x',y')}{1+\|\vs'\|_1}\right]\frac{\alpha(\vs;t,s,x,v)}{1+\|s\|_1}\,d\vs\,dv  \bigg|\\
     		=& \sup_{\mathcal{D}}\bigg| \sum_{l=0}^{n}P_{t,x,y}(\ell(t)=l)\int_0^{T-t}e^{-r(x) v} f_{\tau^l\mid l}(v\mid x,y)\|\vf_1-\vf_2\|_L\frac{\bar{\alpha}(t,s,x,v)}{1+\|s\|_1}\,dv  \bigg|,
     		\end{align*}
     		where $\bar{\alpha}(t,s,x,v):= \int_{\RR_+^n} (1+\|\vs\|_1) \alpha(\vs;t,s,x,v)\,d\vs$. Replacing $\mu^l(u,x)$ by $r(x)$ in Lemma \ref{Lem1}(i), we get $$\bar{\alpha}(t,s,x,v) = 1+ \|s\|_1 e^{r(x) v}.$$
     		
     		\noi Thus, $\|A\varphi_1-A\varphi_2\|_L \, \leq\, J\|\varphi_1-\varphi_2\|_L$, where
     		
     		\begin{align*}
     		J=& \sup_{\mathcal{D}}\bigg| \sum_{l=0}^{n}P_{t,x,y}(\ell(t)=l) \int_0^{T-t}e^{-r(x) v} f_{\tau^l\mid l}(v\mid x,y)\frac{1+ \|s\|_1 e^{r(x) v}}{1+\|s\|_1}\,dv  \bigg|\\
     		\leq & \sup_{\mathcal{D}}\bigg| \sum_{l=0}^{n}P_{t,x,y}(\ell(t)=l) \int_0^{T-t} f_{\tau^l\mid l}(v\mid x,y)\,dv  \bigg|\\
     		=& \sup_{\mathcal{D}}\bigg| \sum_{l=0}^{n}P_{t,x,y}(\ell(t)=l)  F_{\tau^l\mid l}(v\mid x,y)  \bigg|\\
     		<& \sup_{\mathcal{D}}\bigg| \sum_{l=0}^{n}P_{t,x,y}(\ell(t)=l)  \bigg|=1,
     		\end{align*}
     		using $r(x)\geq 0$ and the fact that $ F^l(y|i)<1 $ for all $l,x,y $ and $ i $. Thus $A$ is a contraction in $\mathcal{B}$. This completes the proof.\qed
     	\end{proof}	
     	
     	\begin{remark}
     	By a direct substitution $t=T$ in the \eqref{main integral eqn}, we obtain $\vf(T,s,x,y)=K(s)$. It is interesting to note that we do not have to impose any other boundary conditions for existence and uniqueness of solution of \eqref{main integral eqn}. We can directly obtain other boundary values by substituting the boundary in the integral equation.
     	\end{remark}
     	Next we state some differentiability results of the coefficients \eqref{main integral eqn} in the following Lemma. The Proof of this lemma is given in the appendix.
     		\begin{lem}\label{regularity} Let $F_{\tau^l|l}(T-t|x,y)$ and $P_{t,x,y}(\ell(t)=l)$ be as in Lemma \ref{ptxyltl} and $\alpha(\vs;t,s,x,v)$  as in \eqref{newalpha}. Then
     		\begin{enumerate}
     		\item[(i)] $F_{\tau^l|l}(T-t|x,y)$ and $P_{t,x,y}(\ell(t)=l)$ are  in \emph {dom}($D_{t,y}$), and
     	\begin{align*}
     	 D_{t,y}P_{t,x,y}(\ell(t)=l) = &\sum_{r=0}^{n} f_{\tau^r}(0|x^r,y^r)P_{t,x,y}(\ell(t)=l)-f_{\tau^l}(0|x^l,y^l)\\
     	 D_{t,y}F_{\tau^l|l}(T-t|x,y)= &f_{\tau^l|l}(0|x,y) (F_{\tau^l|l}(T-t|x,y)-1).
     	\end{align*}
     	\item[(ii)] $f_{\tau^l|l}(t|x,y)$ is in \emph{dom}($D_{t,y}$).
     	\item[(iii)] $\alpha(\vs;t,s,x,v)$ is $C^1$ in $t,v$, and infinite time differentiable in $s$.
     	\end{enumerate}
     		\end{lem}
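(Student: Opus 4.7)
My plan is to derive each identity directly from the closed-form integral representations in Lemma \ref{ptxyltl}, relying throughout on a substitution $s\mapsto s-\vp$ that cleanly absorbs the combined shift $(t,y)\mapsto(t+\vp,y+\vp\mathbf{1})$ into a multiplicative rescaling and a single boundary contribution. For $P_{t,x,y}(\ell(t)=l)$, observe first that Lemma \ref{ptxyltl}(i) shows no explicit $t$-dependence, so $D_{t,y}$ reduces to the $\mathbf{1}$-directional derivative in $y$. After shifting $y\mapsto y+\vp\mathbf{1}$ in the integrand and substituting $s\mapsto s-\vp$, the shifted expression factors as $H(\vp)G(\vp)$, where $H(\vp):=\prod_{m}\frac{1-F^m(y^m|x^m)}{1-F^m(y^m+\vp|x^m)}$ (with $H(0)=1$) and $G(\vp)$ is the same integral as in Lemma \ref{ptxyltl}(i), but with lower limit $\vp$ and unshifted $y$ (so $G(0)=P_{t,x,y}(\ell(t)=l)$). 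Differentiating at $\vp=0$, the logarithmic derivative of $H$ yields $\sum_{r}f_{\tau^r}(0|x^r,y^r)\cdot P_{t,x,y}(\ell(t)=l)$ after using Lemma \ref{ptxyltl}(iii), while the boundary contribution from $G'(0)$ gives $-f_{\tau^l}(0|x^l,y^l)$, matching the stated formula.

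The same substitution handles $F_{\tau^l|l}(T-t|x,y)=N(T-t,y)/D(y)$, where $\Phi(s,y):=\prod_{m\neq l}(1-F^m(s+y^m|x^m))f^l(s+y^l|x^l)$, $N(v,y)=\int_0^v\Phi(s,y)\,ds$, and $D(y)=\int_0^\infty\Phi(s,y)\,ds$. After the shift, both $N(T-t-\vp,y+\vp\mathbf{1})$ and $D(y+\vp\mathbf{1})$ differ from $N(T-t,y)$ and $D(y)$ only by $-\int_0^\vp\Phi(s',y)\,ds'$, so each has $\vp$-derivative $-\Phi(0,y)$ at $0$. The quotient rule then reduces $D_{t,y}F_{\tau^l|l}(T-t|x,y)$ to $\frac{\Phi(0,y)}{D(y)}\bigl(F_{\tau^l|l}(T-t|x,y)-1\bigr)$, and $\Phi(0,y)/D(y)=f_{\tau^l|l}(0|x,y)$ by Lemma \ref{ptxyltl}(ii), giving exactly the claimed expression. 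Part (ii) is immediate along the same lines: $f_{\tau^l|l}(v|x,y)=\Phi(v,y)/D(y)$, and A1(ii) ensures that both $\Phi$ and $D$ are continuously differentiable in $y$ (and $\Phi$ is $C^1$ in $v$ as well), so the quotient lies in $\mathrm{dom}(D_{t,y})$.

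For part (iii), I would work directly with the explicit formula \eqref{newalpha}. The covariance entries $\Sigma^{ll'}=\int_t^{t+v}a^{ll'}(u,x)\,du$ are $C^1$ in $(t,v)$ by continuity of $a$, and A1(v) forces $a$ to be positive definite so that $\Sigma$ is invertible on $v>0$; consequently $|\Sigma|^{-1/2}$ and $\Sigma^{-1}$ inherit $C^1$ regularity in $(t,v)$. The centering terms $\bar{z}^l=\int_t^{t+v}\bigl(r(x)-\tfrac{1}{2}a^{ll}(u,x)\bigr)du$ are likewise $C^1$ in $(t,v)$. Dependence on $s$ enters solely through $z^l=\ln(\vs^l/s^l)$, which is $C^\infty$ on $(0,\infty)^n$, so the chain rule yields infinite differentiability in $s$. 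I expect the main bookkeeping effort to lie in part (i), where one must carefully verify that the substitution separates the perturbation correctly, so that the logarithmic derivative of $H$ and the boundary contribution from $G$ combine, after invoking Lemma \ref{ptxyltl}(iii), to reproduce the stated formulas exactly.
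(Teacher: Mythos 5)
Your proposal is correct and takes essentially the same route as the paper: the substitution $s\mapsto s-\vp$ that converts the joint shift $(t,y)\mapsto(t+\vp,y+\vp\mathbf{1})$ into a multiplicative rescaling plus boundary terms of the integrals, followed by the product/quotient rule and Lemma \ref{ptxyltl}(ii)--(iii), is exactly the paper's computation (phrased there via the auxiliary functions $\digamma^l_v$). The only cosmetic difference is in (iii), where the paper additionally records explicit formulas for $\alpha_t$ and $\alpha_v$ via logarithmic differentiation and Jacobi's formula because these are reused later, whereas your regularity argument suffices for the lemma as stated.
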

     	\begin{lem}\label{lm2}
     			Let $ \vf \in \mathcal{B} $ be the solution of the integral equation \eqref{main integral eqn}. Then (i) $\vf\in$ \emph{dom}$(D_{t,y})\cap C^2_s(\mathcal{D})$, and (ii) $ \vf(t,s,x,y) $ is non-negative.
     		\end{lem}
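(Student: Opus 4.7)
\proof
The plan is to lift regularity from the right-hand side of the fixed-point equation $\vf = A\vf$. Since $\vf$ is the unique element of $\mathcal{B}$ satisfying $\vf=A\vf$ by Lemma \ref{integralsol}, it suffices to show that $A\psi$ belongs to \emph{dom}$(D_{t,y}) \cap C^2_s(\mathcal{D})$ whenever $\psi \in \mathcal{B}$, and separately that $A$ preserves non-negativity.

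For the $C^2_s$ claim, the only $s$-dependence in $A\psi$ enters through $\rho_x(t,s)$ (which is $C^\infty$ in $s$ as a B-S-M price with linear-growth payoff) and through the log-normal density $\alpha(\vs;t,s,x,v)$ (which is $C^\infty$ in $s$ by Lemma \ref{regularity}(iii)). Because $\psi\in\mathcal{B}$ has at most linear growth in $\vs$, whereas each $s$-derivative of $\alpha$ is integrable against $(1+\|\vs\|_1)$ with Gaussian tails in $\ln\vs^l$, two derivatives can be passed under the $d\vs$-integral by dominated convergence; summation over the finitely many indices $l,j$ is trivial.

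The \emph{dom}$(D_{t,y})$ claim is the main obstacle, because $t$ and $y$ are entangled in several places inside the integrand. The key device will be the change of variable $u = t + v$ in the $dv$-integral. After this substitution, the argument $(u,\vs,R^l_j x, R^l_0(y+(u-t)\mathbf{1}))$ of $\psi$ and the factor $p^l_{x^lj}(y^l+u-t)$ become invariant under the simultaneous perturbation $(t,y)\mapsto(t+\vp,y+\vp\mathbf{1})$; this is precisely the feature that makes $D_{t,y}$ (rather than $\partial_t$ or $\partial_{y^l}$ alone) the correct directional derivative, and it lets me avoid any unprovable regularity of $\psi$ in $(t,y)$. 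The only $\vp$-dependent factors that remain are the lower integration limit $t+\vp$, the exponential $e^{-r(x)(u-t-\vp)}$, the density $f_{\tau^l\mid l}(u-t-\vp\mid x,y+\vp\mathbf{1})$, and $\alpha(\vs;t+\vp,s,x,u-t-\vp)$. I would apply Leibniz's rule termwise, invoking Lemma \ref{regularity}(ii) for $f_{\tau^l\mid l}$, Lemma \ref{regularity}(iii) for $\alpha$, Lemma \ref{regularity}(i) for the outer prefactor $P_{t,x,y}(\ell(t)=l)(1-F_{\tau^l\mid l}(T-t\mid x,y))$, and the B-S-M equation for $\partial_t\rho_x$; the bound $|\psi|\le \|\psi\|_L(1+\|\vs\|_1)$ supplies the envelope needed for domination under the $d\vs\,du$-integral.

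For (ii), I would work in the closed sub-cone $\mathcal{B}^+ := \{\psi\in\mathcal{B}:\psi\ge 0\}$. Every factor on the right-hand side of \eqref{main integral eqn} is non-negative: $\rho_x\ge 0$ by a Feynman--Kac representation applied to $K\ge 0$, $1-F_{\tau^l\mid l}\in[0,1]$, and $P_{t,x,y}(\ell(t)=l)$, $e^{-r(x)v}$, $f_{\tau^l\mid l}$, $p^l_{x^lj}$, $\alpha$ are all manifestly non-negative. Hence $A$ maps $\mathcal{B}^+$ into $\mathcal{B}^+$; since $\mathcal{B}^+$ is closed in $\mathcal{B}$ and $A$ is a contraction there, Banach's fixed-point theorem forces the unique fixed point $\vf$ to lie in $\mathcal{B}^+$, i.e., $\vf\ge 0$.
\qed
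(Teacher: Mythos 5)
Your proposal follows essentially the same route as the paper: the paper likewise performs the substitution that renders the argument of $\vf$ and the factor $p^l_{x^lj}$ invariant under the simultaneous shift $(t,y)\mapsto(t+\vp,y+\vp\mathbf{1})$, differentiates the remaining factors via Lemma \ref{regularity} and the estimates on $\alpha_t,\alpha_v$, and dominates using the linear-growth envelope against the log-normal density; non-negativity is obtained exactly as you do, from $A$ preserving the non-negative cone. The only difference is one of detail, not of method (the paper spells out the mean-value-theorem expansion and the uniform integrability/tightness conditions that your appeal to dominated convergence implicitly requires).
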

     		\proof (i) Using the smoothness of $\rho_x$ for each $x$, the first term on the right hand side of \eqref{main integral eqn} is in dom$(D_{t,y})\cap C^2_s(\mathcal{D})$. Thus it is enough to check the desired smoothness of
     			\begin{equation*}
     			\beta_l(t,s,x,y)=\int_0^{T-t}e^{-r(x) v} f_{\tau^l\mid l}(v\mid x,y)
     			\sum_{j\neq x^l}p^l_{x^l j}(y^l+v)\int_{\mathbb{R}_+^n} \vf\left(t+v,\vs,R^l_jx,R^l_0(y+v\mathbf{1})\right)\alpha(\vs;t,s,x,v)\,d\vs\,dv .
     			\end{equation*}
   First we check the applicability of $D_{t,y}$. It is easy to see that $D_{t,y}\beta_l(t,s,x,y)$ is the limit of the following expression
     			\begin{align*}
     			\lefteqn{\frac{1}{\vp}\Big[\int_0^{T-t-\vp}e^{-r(x) v} f_{\tau^l\mid l}(v\mid x,y+\vp\mathbf{1})
     				\sum_{j\neq x^l}p^l_{x^l j}(y^l+v+\vp)\int_{\mathbb{R}_+^n} \vf\left(t+v+\vp,\vs,R^l_jx,R^l_0(y+(v+\vp)\mathbf{1})\right)\times}\\
     			&\alpha(\vs;t+\vp,s,x,v)\,d\vs\,dv
     			-\int_0^{T-t}e^{-r(x) v} f_{\tau^l\mid l}(v\mid x,y)
     			\sum_{j\neq x^l}p^l_{x^l j}(y^l+v)\int_{\mathbb{R}_+^n} \vf\left(t+v,\vs,R^l_jx,R^l_0(y+v\mathbf{1})\right)\times\\
     			&\alpha(\vs;t,s,x,v)\,d\vs\,dv \Big].
     			\end{align*}
     			After a suitable substitution, the above expression becomes
     			\begin{align}\label{betabar}
     			\no \lefteqn{\int_{\vp}^{T-t}e^{-r(x) v}\sum_{j\neq x^l}p^l_{x^l j}(y^l+v)\int_{\mathbb{R}_+^n}\vf\left(t+v,\vs,R^l_jx,R^l_0(y+v\mathbf{1})\right)\bar{\beta}_{\vp}(v,\vs;t,s,x,y)d\vs\,dv\Big.}\\
     			&-\frac1\vp\int_{0}^{\vp}  e^{-r(x)v}f_{\tau^l|l}(v|x,y)\sum_{j\neq x^l}p^l_{x^l j}(y^l+v)\int_{\mathbb{R}_+^n}\vf\left(t+v,\vs,R^l_jx,R^l_0(y+v\mathbf{1})\right)\alpha(\vs;t,s,x,v)\,d\vs\,dv,
     			\end{align}
     			where $$\bar{\beta}_{\vp}(v,\vs;t,s,x,y):=\frac{1}{\vp}\big(e^{r(x)\vp}f_{\tau^l\mid l}(v-\vp\mid x,y+\vp\mathbf{1})\alpha(\vs;t+\vp,s,x,v-\vp)
     			-f_{\tau^l\mid l}(v\mid x,y)\alpha(\vs;t,s,x,v)\big).$$
     			Now the above defined $\bar{\beta}_{\vp}$ can be rewritten as
     			\begin{align}\label{mvt}
     		\no\lefteqn{\frac{1}{\vp}\Big[\left(e^{r(x)\vp}-1+1\right)\big(f_{\tau^l\mid l}(v-\vp\mid x,y+\vp\mathbf{1})-f_{\tau^l\mid l}(v\mid x,y+\vp\mathbf{1})+f_{\tau^l\mid l}(v\mid x,y+\vp\mathbf{1})-f_{\tau^l\mid l}(v\mid x,y)}\\
     			+&\no f_{\tau^l\mid l}(v\mid x,y)\big)\times\big(\alpha(\vs;t+\vp,s,x,v-\vp)-\alpha(\vs;t,s,x,v-\vp)
     			+\alpha(\vs;t,s,x,v-\vp)-\alpha(\vs;t,s,x,v)
     			+\alpha(\vs;t,s,x,v)\big)\\
     			-& f_{\tau^l\mid l}(v\mid x,y)
     			\alpha(\vs;t,s,x,v)\Big].
     			\end{align}
Due to the continuous differentiability results in Lemma \ref{ptxyltl}(ii) and Lemma \ref{regularity}(ii),(iii), we can use the mean value Theorem to rewrite \eqref{mvt} as
     			\begin{align*}
     			&\Big[\left(\vp r(x)e^{r(x)\vp_0}+1 \right)\left(-f'_{\tau^l\mid l}(v-\vp_1\mid x,y+\vp\mathbf{1})+\sum_{i=1}^{n}\frac{\pa}{\pa y_i} f_{\tau^l\mid l}(v\mid x,y+\vp_2\mathbf{1})+\frac{1}{\vp}f_{\tau^l\mid l}(v\mid x,y)\right)\times\\
     			&\Big(\vp\alpha_t(\vs;t+\vp_3,s,x,v-\vp)
     			-\vp\alpha_v(\vs;t,s,x,v-\vp_4)
     			+\alpha(\vs;t,s,x,v)\Big)-\frac{1}{\vp}f_{\tau^l\mid l}(v\mid x,y)
     			\alpha(\vs;t,s,x,v) \Big],
     			\end{align*}
     			for some $\vp_0,\vp_1,\vp_2,\vp_3,\vp_3<\vp$. After some rearrangement of terms in the above expression, we get
     			\begin{align*}
     			 \bar{\beta}_{\vp}(v,\vs;t,s,x,y)=&\alpha(\vs;t,s,x,v)\left(r(x)e^{r(x)\vp_0}f_{\tau^l|l}(v|x,y)-f'_{\tau^l\mid l}(v-\vp_1\mid x,y+\vp\mathbf{1})+\sum_{i=1}^{n}\frac{\pa}{\pa y_i} f_{\tau^l\mid l}(v\mid x,y+\vp_2\mathbf{1})\right)\\
     			&+f_{\tau^l\mid l}(v\mid x,y)\Big( \alpha_t(\vs;t+\vp_3,s,x,v-\vp)
     			-\alpha_v(\vs;t,s,x,v-\vp_4) \Big)+\vp \mathcal{G}_{\vp}(v,\vs;t,s,x,y),
     			\end{align*}
     	 where \begin{align*}
                  \mathcal{G}_{\vp}(v,\vs;t,s,x,y):=& r(x)e^{r(x)\vp_0}\left(-f'_{\tau^l\mid l}(v-\vp_1\mid x,y+\vp\mathbf{1})+\sum_{i=1}^{n}\frac{\pa}{\pa y_i} f_{\tau^l\mid l}(v\mid x,y+\vp_2\mathbf{1})\right)\times\\
                  &\left(\vp\alpha_t(\vs;t+\vp_3,s,x,v-\vp)-\vp\alpha_v(\vs;t,s,x,v-\vp_4)+\alpha(\vs;t,s,x,v)\right)\\
                 & + \left(r(x)e^{r(x)\vp_0}f_{\tau^l\mid l}(v\mid x,y)-f'_{\tau^l\mid l}(v-\vp_1\mid x,y+\vp\mathbf{1})+\sum_{i=1}^{n}\frac{\pa}{\pa y_i} f_{\tau^l\mid l}(v\mid x,y+\vp_2\mathbf{1})\right)\times\\
                 &\left(\alpha_t(\vs;t+\vp_3,s,x,v-\vp)-\alpha_v(\vs;t,s,x,v-\vp_4)\right).
     	     \end{align*}
    \noi We also recall from \eqref{alphat} and \eqref{alphav} that $$\alpha_t(\vs;t,s,x,v)=\alpha(\vs;t,s,x,v) O(\log^2|\vs|)~	\text{and}~ \alpha_v(\vs;t,s,x,v)=\alpha(\vs;t,s,x,v) )O(\log^2|\vs|),$$
    where $|\vs|:=\max_{i}|\vs_i|$.
The expression in \eqref{betabar} has two additive terms. For showing convergence of the first term, we intend to use above expressions for applying dominated and Vitali's convergence theorems in various cases. For that, as $\vf \in \mathcal{B}$, it would be sufficient if we have the following three results,
    \begin{itemize}
   \item[(a)] $v\mapsto \int_{\mathbb{R}^n_+}\left(c_1^*\vs+c_2\right) \log^2|\vs|\alpha(\vs;t,s,x,v)d\vs$ is bounded and left continuous,
   \item[(b)] $t\mapsto \int_{\mathbb{R}^n_+}\left(c_1^*\vs+c_2\right) \log^2(|\vs|)\alpha(\vs;t,s,x,v)d\vs$ is continuous uniformly with respect to $v$,
   \item [(c)] $\|\vs\|^2\alpha(\vs;t+\vp_1,s,x,v+\vp_2)$  is uniform integrable and tight w.r.t. $\vs$ for $\vp_1, \vp_2 \ll1$.
    \end{itemize}
    To prove the result (a), we introduce a function $B(v):=\int_{\mathbb{R}^n_+}\left(c_1^*\vs+c_2\right) \log^2(|\vs|)\alpha(\vs;t,s,x,v)d\vs.$ Now for $\vp>0$ using the mean value theorem, there exist a $0<\vp'<\vp$ such that
    \begin{align*}
   \frac{1}{\vp}\left( B(v)-B(v-\vp)\right)=&\int_{\mathbb{R}^n_+}\left(c_1^*\vs+c_2\right) \log^2(|\vs|)\alpha_v(\vs;t,s,x,v-\vp')d\vs\\
    \leq &  \int_{\mathbb{R}^n_+}\left(c_3\|\vs\|^2_2+c_4\right) \alpha(\vs;t,s,x,v-\vp')d\vs,
    \end{align*}
for some positive constants $c_3,c_4$. Now Lemma \ref{dists}(iii) suggests that the right hand side is bounded in $v$ on $[\vp,T]$. This implies that $B$ is left continuous. Using the similar reasoning the boundedness of $B$ also follows from Lemma \ref{dists}(iii).  Similarly one can prove the result (b). In order to prove (c), we first recall that a family of normal random variables with bounded mean and variance is uniformly integrable and tight. Therefore (c) follows as here a product of a polynomial and a lognormal density function appears.

Now we address the convergence of the second term of \eqref{betabar}. Clearly the result (a) implies boundedness of $v\mapsto \int_{\mathbb{R}_+^n}\vf\left(t+v,\vs,R^l_jx,R^l_0(y+v\mathbf{1})\right)\alpha(\vs;t,s,x,v)\,d\vs$, which assures the desired convergence. Thus $ \beta_l\in \text{dom}(D_{t,y}) $ and hence $ \vf\in \text{dom}(D_{t,y})$.

\noi Now we discuss the smoothness with respect to $s$. First we observe that $\alpha_{s^{l'}}(\vs;t,s,x,v)=\frac{1}{s^{l'}}O(\log(|\vs|))\alpha(\vs;t,s,x,v)$. Since $\vf \in \mathcal{B}$, using uniform integrability and tightness of $\frac{1}{s^{l'}+\vp}\|\vs\|^2\alpha(\vs;t,s+\vp,x,v)$ and uniform boundedness of $v\mapsto \int_{\mathbb{R}^n_+}\frac{1}{s^{l'}+\vp}\|\vs\|^2\alpha(\vs;t,s+\vp,x,v)d\vs$ for $\vp\ll1$, we conclude the differentiability of $\beta_l(t,x,y)$ with respect to $s^l$. Similarly we can establish existence of partial derivatives of any higher order successively. Thus one can obtain twice continuous differentiability of $\beta_l$.
     			
    \noi (ii) We have already shown that $ A:\mathcal B\rightarrow\mathcal B $ is a contraction. From the form of equation \eqref{black scholes}, and non-negativity of $K$, it is clear that \eqref{black scholes} admits a non-negative solution. Since all the coefficients in equation \eqref{main integral eqn} are non-negative, it follows that $ A\vf\geq 0 $ for $ \vf\geq 0 $. Furthermore, we have shown that $ A $ has a fixed point in $ \mathcal{B} $. It can be easily argued that this fixed point is, in fact, non-negative. Hence, $ \vf  $ is non-negative.
     			 \qed
    \section{The Partial Differential Equation}\label{section for pde }
    In this section we establish Theorem \ref{existence theorem}, i.e uniqueness and existence of \eqref{p1}-\eqref{pdeboundary}. Before addressing that it is important to clarify few issues regarding boundary conditions. At $s=0$ facet the partial derivative with respect to $s$ disappear. Since the nature of the domain is triangular, it can be shown by using the method of characteristic that the initial condition would lead to a solution to \eqref{p1}-\eqref{pdeboundary}. It can also be shown that the PDE would have no solution if we impose a boundary condition which is not obtain from the initial condition. We refer (\cite{tanmay},pp.32) for more details.
     Let $\tilde{W}$ be a standard $n$-dimensional Brownian motion on a probability space $(\tilde\Omega,\tilde{\mathcal{F}}, \tilde P)$. For each $l=1,2,\dots,n$, let $\tilde{S}^l_t$ satisfies
    \begin{equation}\label{emmsde}
    d\tilde{S}^l_t = \tilde{S}^l_t\left[r(X_t)dt + \sum_{j=1}^{n}\si^l_{j}(t,X_t)d\tilde{W}^j_t\right],\tab \tilde{S}_0>0,
    \end{equation}
    where $X_t$ is the age-dependent process given by equations \eqref{age1} and \eqref{age2} on  $(\tilde\Omega,\tilde{\mathcal{F}}, \tilde P)$ and $\si^l$ is the $l$th row of $\si$. We denote $\tilde{S}_t:=(\tilde{S}^1_t,\ldots,\tilde{S}^n_t)$.

    	\begin{prop}\label{theo4} (i) The Cauchy problem \eqref{p1}-\eqref{pdeboundary} has a generalized solution, $\vf$. (ii) Under assumption \emph{(A1)(i)-(iv}), $\vf$ solves the integral equation \eqref{main integral eqn}.(iii) $\vf \in \mathcal{B}$.
    	\end{prop}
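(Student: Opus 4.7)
The plan is to define the candidate solution $\vf$ via the discounted Feynman--Kac formula under the equivalent martingale measure $P^\ast$ constructed in Section 2.2, namely
\begin{equation*}
\vf(t,s,x,y):=\tilde E\!\left[\,e^{-\int_t^T r(X_u)\,du}\,K(\tilde S_T)\,\Big|\,\tilde S_t=s,\;X_t=x,\;Y_t=y\right],
\end{equation*}
where $\tilde S$ solves \eqref{emmsde} driven by the Wiener process $\tilde W$ under $\tilde P=P^\ast$. Because the triple $(\tilde S_t,X_t,Y_t)$ is Markov (using that $\tilde S$ is Markov given $X$, and $(X,Y)$ is itself Markov by Section 2.1), the above conditional expectation depends only on the starting coordinates, so $\vf$ is a well-defined measurable function on $\bar{\mathcal D}$. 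This is the sense in which $\vf$ is a generalized solution of \eqref{p1}--\eqref{pdeboundary}: the terminal condition \eqref{pdeboundary} is immediate by setting $t=T$, and the stochastic representation plays the role of the PDE until regularity is established in Section 5.

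For part (ii) the key step is to condition on the next transition of the CSM process $X$. Let $\tau(t):=\min_l\tau^l(t)$ denote the waiting time to the next jump after time $t$ and $\ell(t)$ the component where that jump occurs. Between $t$ and $t+\tau(t)$ the process $X$ is frozen at $x$, so on this random interval $\tilde S$ is a geometric Brownian motion with coefficients $r(x)$ and $\sigma(\cdot,x)$; its time-$(t+v)$ transition density is exactly the log-normal $\alpha(\vs;t,s,x,v)$ of \eqref{newalpha}, and discounting by $e^{-r(x)v}$ on this interval is deterministic. Conditioning on $\{\tau(t)\ge T-t\}$ yields the purely diffusive contribution, which is precisely the Black--Scholes value $\rho_x(t,s)$ of \eqref{black scholes} weighted by $1-F_{\tau^l\mid l}(T-t\mid x,y)$ after further conditioning on $\ell(t)=l$. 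On $\{\tau(t)<T-t\}$, at the jump time $t+v$ the state moves from $x^l$ to some $j\neq x^l$ with probability $p^l_{x^l j}(y^l+v)$, while $Y$ resets its $l$th component to $0$ and is incremented by $v\mathbf{1}$ in all other components; by the strong Markov property of $(\tilde S,X,Y)$, the remaining value equals $\vf(t+v,\vs,R^l_j x,R^l_0(y+v\mathbf{1}))$ integrated against $\alpha$. Summing over $l$ using $P_{t,x,y}(\ell(t)=l)$ and the conditional density $f_{\tau^l\mid l}(v\mid x,y)$ supplied by Lemma \ref{ptxyltl} yields \eqref{main integral eqn} term by term.

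For part (iii), I use assumption (A2)(ii): $|K(s)|\le\|c_1\|_\infty\|s\|_1+c_2$. Applying this inside the defining conditional expectation,
\begin{equation*}
|\vf(t,s,x,y)|\le\|c_1\|_\infty\,\tilde E\!\left[e^{-\int_t^T r(X_u)du}\sum_{l=1}^n\tilde S_T^l\,\Big|\,\tilde S_t=s,X_t=x,Y_t=y\right]+c_2\,\tilde E\!\left[e^{-\int_t^T r(X_u)du}\right].
\end{equation*}
Under $P^\ast$ the discounted prices $e^{-\int_0^u r(X_v)dv}\tilde S_u^l$ are martingales, so the first conditional expectation equals $\|s\|_1$; since $r\ge 0$ the second is at most $1$. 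Therefore $|\vf(t,s,x,y)|\le\|c_1\|_\infty\|s\|_1+c_2$, giving $\|\vf\|_L<\infty$ and hence $\vf\in\mathcal B$.

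The main obstacle I anticipate is the careful justification of the jump-conditioning argument in (ii): one must verify that conditioning on the pair $(\ell(t),\tau^{\ell(t)}(t))$ decouples the diffusive evolution of $\tilde S$ on $[t,t+\tau(t)]$ from the post-jump state, and that Fubini/Tonelli may be applied to interchange the $dv$ integral with the log-normal $d\vs$ integral. The time-homogeneity of $(X,Y)$ and the independence of the Wiener drivers from the Poisson measures $\wp^l$ make this rigorous, but the bookkeeping across the $n+1$ components of the CSM is the delicate point and is what Lemma \ref{ptxyltl} is designed to absorb.
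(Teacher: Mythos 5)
Your proposal is correct and takes essentially the same route as the paper: the generalized solution is defined as the discounted conditional expectation $E[e^{-\int_t^T r(X_u)du}K(\tilde S_T)\mid \tilde S_t=s,X_t=x,Y_t=y]$ for the risk-neutral dynamics, part (ii) is obtained by conditioning on the component $\ell(t)$ and the time $\tau^{\ell(t)}(t)$ of the next jump (splitting into the no-jump-before-$T$ event, which gives the $\rho_x$ term, and the jump-at-$t+v$ event, which gives the integral term via the log-normal density $\alpha$ and the strong Markov property), and part (iii) rests on the martingale property of the discounted price together with (A2)(ii). The only cosmetic differences are that you phrase the representation under $P^*$ on the original space rather than via the auxiliary process $\tilde S$ on $(\tilde\Omega,\tilde{\mathcal F},\tilde P)$, and that in (iii) you bound $|\vf|$ directly rather than $|\vf(t,s,x,y)-c_1^*s|\le c_2$ as the paper does.
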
	
    	
    	\proof (i) Let $\tilde{S}_t$ be the strong solution of the SDE \eqref{emmsde}. Let $\tilde{ \mathfrak{F}}_t$ be the filtration generated by $ \tilde{S}_t $ and $ X_t $, that satisfies the usual hypothesis. Since $(t,X_t,Y_t)$ is Markov, then the process $(t,\tilde{S}_t,X_t,Y_t)$ is Markov process. Let $\mathcal{A}$ be the infinitesimal generator of $(t,\tilde{S}_t,X_t,Y_t)$ , where 	
    		\begin{align}\label{defgen}
    	\no	\mathcal{A} \varphi(t,s,x,y)=&D_{t,y}\vf(t,s,x,y)+r(x)\sum_{l=1}^n s^l \frac{\pa \varphi}{\pa s^l}(t, s, x, y)+\frac{1}{2}\sum_{l=1}^{n}\sum_{l'=1}^{n}a^{ll'}(t,x^l)s^ls^{l'}\frac{\partial^2\vf}{\partial s^{l}\pa s^{l'}}(t,s,x,y) \\
    		&+ \sum_{l=0}^n \sum_{j\neq x^l}\lambda_{x^lj}^l(y^l)\left[ \vf(t,s,R^l_jx,R^0_ly)-\vf(t,s,x,y) \right],
    		\end{align}
    		for every function $\varphi$ which is compactly supported $C^2$ in $s$ and $C^1$ in $y$.
    		Let \begin{equation}\label{n_t}
    		N_t:=  E[e^{-\int_t^T r(X_u)du} K(\tilde{S}_T)\mid \tilde{S}_t=s,X_t=x,Y_t=y].
    		\end{equation}
    The above expectation is finite due to (A2)(ii) and Lemma \ref{Lem1}. Thus  \eqref{n_t} suggests that $N_t$ is a $\tilde{\mathfrak{F}}_t$ martingale. Since $K(s)$ has at-most linear growth, and $\tilde{S}_t$ has finite expectation, \eqref{n_t} suggests that $ E|N_t| < \infty$ for each $t$. Hence using the Markov semigroup of $(t,\tilde{S}_t,X_t,Y_t)$ the PDE has a generalized solution $\vf:\mathcal{D}\to \mathbb{R}$ measurable given by
     \begin{equation}\label{gensol}
        		\vf(t,s,x,y):=  E[e^{-\int_t^T r(X_u)du} K(\tilde{S}_T)\mid \tilde{S}_t=s,X_t=x,Y_t=y].
        		\end{equation}
    	
   \noi (ii) 		By conditioning \eqref{gensol} on transition times, we get
    		\begin{eqnarray*}
    		\lefteqn{\varphi(t,\tilde{S}_t,X_t,Y_t)}\\
    		&=& E\left[ E\left[ e^{-\int_t^T r(X_u)\,du} K(\tilde{S}_T)\mid \tilde{S}_t,X_t,Y_t,l(t)=l \right]\mid \tS_t,X_t,Y_t \right]\\
    		&=& \sum_{l=0}^{n} P_{t,x,y}(\ell(t)=l) \ E\left[ e^{-\int_t^T r(X_u)\,du} K(\tilde{S}_T)\mid \tilde{S}_t,X_t,Y_t,l(t)=l \right]\\
    		&=& \sum_{l=0}^{n} P_{t,x,y}(\ell(t)=l) \ E\left[ E\left[ e^{-\int_t^T r(X_u)\,du} K(\tilde{S}_T)\mid \tilde{S}_t,X_t,Y_t,l(t)=l,\tau^l(t) \right]\mid \tilde{S}_t,X_t,Y_t,l(t)=l \right].
    		\end{eqnarray*}
    		Now,
    		\begin{eqnarray*}
    		\lefteqn{E\left[ E\left[ e^{-\int_t^T r(X_u)\,du} K(\tilde{S}_T)\mid \tilde{S}_t,X_t,Y_t,l(t)=l,\tau^l(t) \right]\mid \tilde{S}_t,X_t,Y_t,l(t)=l \right]}\\
    		&=& P[\tau^l(t)>T-t]\rho_x(t,\tS_t)+\int_0^{T-t} E\left[ e^{-\int_t^T r(X_u)\,du} K(\tilde{S}_T)\mid \tilde{S}_t,X_t,Y_t,l(t)=l,\tau^l(t)=v \right]f_{\tau^l\mid l}(v\mid X_t,Y_t)\,dv.
    		\end{eqnarray*}
    		We note that
    		\begin{eqnarray*}
    		\lefteqn{E\left[ e^{-\int_t^T r(X_u)\,du} K(\tilde{S}_T)\mid \tilde{S}_t,X_t,Y_t,l(t)=l,\tau^l(t)=v \right]}\\
    		&=& e^{-r(X_t)v}\sum_{j^l\neq X^l}p^l_{X^l j^l}(Y^l+v) \int_{\mathbb{R}_+^n} E\left[ e^{-\int_{t+v}^T r(X_u)\,du} K(\tilde{S}_T)\mid \tS_{t+v}=\vs,X_{t+v}=R^l_jx,\right.\\
    		&&\left. Y_{t+v}=R^l_0y,l(t)=l,\tau^l(t)=v \right]\alpha(\vs;t,s,x,v)\,d\vs.
    		\end{eqnarray*}
    		Therefore
    		\begin{eqnarray}
    		\lefteqn{\varphi(t,\tilde{S}_t,X_t,Y_t)}\no\\
    		&=&\no \sum_{l=0}^{n}P_{t,x,y}\left(\ell(t)=l\right)
    	 \left( \rho_x(t,\tS_t)\left(1-F_{\tau^l\mid l}(T-t\mid X_t,Y_t)\right)  + \int_0^{T-t}e^{-r(X_t)v} f_{\tau^l\mid l}(v\mid X_t,Y_t)\times\right.\\
    		&& \no \left.\sum_{j\neq x^l}p^l_{x^l j}(y^l+v) \int_{\mathbb{R}_+^n} \vf\left(t+v,R^l_{\vs^l}s,R^l_jx,R^l_0y\right)\alpha(\vs;t,s,x,v)\,d\vs\,dv \right).
    		\end{eqnarray}
    		Using (A1)(iv), and since $ \lambda^l_{ij}(y)>0 $ for $ i\neq j $, we can replace $( \tilde{S}_t, X_t,Y_t)$ by the generic variable $(s,x,y)$ in the above relation. As a conclusion, $\varphi$ is a solution of \eqref{main integral eqn}.

    		\noi (iii) To show $\vf$ is of at-most linear growth, it is sufficient to show for all $(t,s,x,y)\in \mathcal{D}$ $|\vf(t,s,x,y)-c^*_1s|\leq c_2,$ where $c_1,c_2$ is as in (A2)(ii).
    		 We note that, if $\tilde{S}_t$ is the solution of \eqref{emmsde}, $e^{-\int_{0}^{t}r(X_u)du}\tilde S_t$ is a $\tilde{\mathfrak{F}}_t$ martingale. Therefore by using the Markov property of $\tilde{S}_t,X_t,Y_t$, and  the fact $e^{-\int_{0}^{t}r(X_u)du}$ is $\tilde{\mathfrak{F}}_t$-measurable, we obtain
    		\begin{align*}
            E\left[e^{-\int_{t}^{T}r(X_u)du}\tilde S_T\big|\tilde{S}_t,X_t,Y_t\right]=&	E\left[e^{-\int_{t}^{T}r(X_u)du}\tilde S_T\big|\tilde{\mathfrak{F}}_t\right]\\
            =& e^{-\int_{0}^{t}r(X_u)du}E\left[e^{-\int_{0}^{T}r(X_u)du}\tilde S_T\big|\tilde{\mathfrak{F}}_t\right]\\
    		=& \tilde{S}_t.
          \end{align*}
           Using this equality, \eqref{gensol} and (A2)(ii), we have
              		\begin{align*}
              		&|\vf(t,s,x,y)-c^*_1s|\\
              		&=\bigg|E\left[e^{-\int_t^T r(X_u)du} K(\tilde{S}_T)\mid\tilde{S}_t=s,X_t=x,Y_t=y\right]- c^*_1 E\left[e^{-\int_t^T r(X_u)du} \tilde{S}_T\mid \tilde{S}_t=s,X_t=x,Y_t=y\right]\bigg|\\
              		&\leq  E\left[[e^{-\int_t^T r(X_u)du}|K(\tilde{S}_T)-c^*_1 \tilde{S}_T |\mid \tilde{S}_t=s,X_t=x,Y_t=y\right]\\
              		&\leq  c_2.
              		\end{align*}
              		This completes the proof.\qed

  \noi \textbf{Proof of Theorem \ref{existence theorem}:} Proposition \ref{theo4} implies that the PDE \eqref{p1}-\eqref{pdeboundary} has a generalized solution which is in $\mathcal{B}$, and also solves the integral equation. Lemma \ref{integralsol} suggests that the integral equation has only one solution in $\mathcal{B}$. Finally Lemma \ref{lm2} asserts that this unique solution of the integral equation is in dom$(D_{t,y})\cap C^2_s$. Therefore using the above results, we conclude that \eqref{p1}-\eqref{pdeboundary} has a generalized solution which is in the domain of the operators in \eqref{p1}. Hence the generalized solution \eqref{gensol} solves \eqref{p1}-\eqref{pdeboundary} classically. To prove the  uniqueness, first assume that $\vf_1$ and $\vf_2$ are two classical solutions of \eqref{p1}-\eqref{pdeboundary} in the prescribed class of functions. Then using Proposition \ref{theo4}, it follows that both also solve \eqref{main integral eqn}. By Lemma \ref{integralsol}, there is only one such solution in the prescribed class. Hence $\vf_1=\vf_2$.\qed	

    \begin{lem}\label{deribnd}
         Let $\vf(t,s,x,y)$ be the classical solution of the Cauchy problem \eqref{p1}-\eqref{pdeboundary}.Under assumption \emph{(A2)(i)}, $\frac{\pa \vf}{\pa s^m}(t,s,x,y)$ is bounded.
    \end{lem}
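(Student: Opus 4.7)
The plan is to differentiate the probabilistic representation \eqref{gensol} directly. Since the SDE \eqref{emmsde} is a geometric Brownian motion, for initial condition $\tilde S_t = s$ its solution admits the explicit form
\[
\tilde S^l_T = s^l\, M^l, \qquad M^l := \exp\!\left[\int_t^T\!\left(r(X_u)-\tfrac{1}{2} a^{ll}(u,X_u)\right)du + \sum_{j=1}^n \int_t^T \sigma^l_j(u,X_u)\,d\tilde W^j_u\right],
\]
where $M^l$ does not depend on $s$. Thus, on a single probability space, perturbing the $m$-th coordinate of the initial state from $s^m$ to $s^m+h$ changes $\tilde S^m_T$ by exactly $h M^m$ and leaves the other components of $\tilde S_T$ unchanged; in particular the two random vectors $\tilde S^{s+he_m}_T$ and $\tilde S^s_T$ are coupled so that their $\ell^1$-distance equals $|h|M^m$ almost surely.

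Let $L$ denote the Lipschitz constant of $K$. For $h\neq 0$, combining the Lipschitz bound with the pathwise identity above and the representation \eqref{gensol} yields
\[
\left|\frac{\vf(t,s+he_m,x,y)-\vf(t,s,x,y)}{h}\right|\;\leq\;\frac{L}{s^m}\;E\!\left[e^{-\int_t^T r(X_u)\,du}\,\tilde S^m_T \,\Big|\,\tilde S_t=s,\,X_t=x,\,Y_t=y\right].
\]
The discounted process $e^{-\int_0^t r(X_u)du}\tilde S^m_t$ is an $\tilde{\mathfrak{F}}_t$-martingale, as noted in the proof of Proposition \ref{theo4}(iii); hence the conditional expectation on the right equals $s^m$, and the right-hand side simplifies to $L$ uniformly in $h$.

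By Lemma \ref{lm2} the partial derivative $\partial \vf/\partial s^m$ exists on $\mathcal{D}$, and so it is the limit as $h\to 0$ of the difference quotient estimated above. Passing to this limit in the uniform bound yields $|\partial \vf/\partial s^m(t,s,x,y)|\leq L$ on all of $\mathcal{D}$, which is the desired conclusion. No substantive obstacle is expected here: the argument is entirely pathwise up to the martingale identity for the discounted price, and no delicate interchange of limit and integral is required because the bound is already uniform in $h$.
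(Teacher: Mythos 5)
Your proof is correct, but it takes a genuinely different route from the paper's. The paper differentiates the PDE \eqref{p1} with respect to $s^m$, observes that the resulting equation \eqref{derivf} is the Kolmogorov equation for a modified regime-switching geometric Brownian motion $\bar S$ (with drift shifted by the row $a^{m\cdot}$ of the diffusion matrix), and represents $\psi^m=\pa\vf/\pa s^m$ as a conditional expectation of $K'(\bar S_T)$, where $K'$ is the a.e.\ partial derivative of $K$, bounded because $K$ is Lipschitz. You instead work directly on the Feynman--Kac representation \eqref{gensol}: the multiplicative structure $\tilde S^l_T=s^lM^l$ gives a pathwise coupling in which perturbing $s^m$ by $h$ moves $\tilde S_T$ by exactly $|h|M^m$ in $\ell^1$, the Lipschitz bound on $K$ and the martingale property of the discounted price (already recorded in the proof of Proposition \ref{theo4}(iii)) give a difference-quotient bound $L$ uniform in $h$, and Lemma \ref{lm2} supplies the existence of the derivative so the bound passes to the limit. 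Your argument is more elementary and arguably more watertight: it sidesteps the paper's unproved assertion that the differentiated PDE admits the stochastic representation \eqref{repofsol} with the new process $\bar S$ (a step that implicitly requires a well-posedness and verification argument for \eqref{derivf}, and whose formula as written is somewhat loosely stated), and it yields the explicit bound $\|\pa\vf/\pa s^m\|_\infty\le L$ with $L$ the Lipschitz constant of $K$. What the paper's approach buys in exchange is a representation formula for the derivative itself, in the same spirit as the integral formula \eqref{20} used later for computing the hedge ratio. The only point worth making explicit in your write-up is the norm with respect to which $K$ is Lipschitz (with $\|\cdot\|_1$, the single-coordinate perturbation contributes exactly $|h|M^m$, as you use); this is a cosmetic matter, not a gap.
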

    \proof Since $\vf(t,s,x,y)$ is the classical solution of \eqref{p1}-\eqref{pdeboundary} it is in dom$(D_{t,y})\cap C^2_s$. In fact $\vf$ has greater regularity than $C^2_s$ which is evident in the proof of Lemma \ref{lm2}. Indeed due to Lemma \ref{regularity}(iii) and the $C^{\infty}$ smoothness of $\rho, \vf$ is $C^{\infty}$ in $s$.  Let $\psi^m(t,s,x,y):=\frac{\pa \vf}{\pa s^m}(t,s,x,y)$, for $m=1,\ldots,n$. Now differentiating equation \eqref{p1} with respect to $s^m$ and using the fact that $a(t,x)$ is symmetric, we obtain
    \begin{align}\label{derivf}
   \no D_{t,y}\psi^m(t,s,x,y)+&\sum_{l=1}^{n}s^l\left(r(x)+a^{ml}(t,x)\right)\frac{\pa \psi^m}{\pa s^l} (t,s,x,y)+
    \frac{1}{2}\sum_{l=1}^{n}\sum_{l'=1}^{n}a^{ll'}(t,x)s^ls^{l'}\frac{\pa^2 \psi^m}{\pa s^l \pa s^{l'}}(t,s,x,y)\\
    +&\sum_{l=0}^n \sum_{j\neq x^l}\lambda_{x^lj}^l(y^l)\left[ \psi^m(t,s,R^l_j x,R^l_0 y)-\psi^m(t,s,x,y) \right]=0.
    \end{align}	
    It is easy to check that
     \begin{align*}
       \no \hat{\mathcal{A}}\psi^m(t,s,x,y)&=D_{t,y}\psi^m(t,s,x,y)+\sum_{l=1}^{n}s^l\left(r(x)+a^{ml}(t,x)\right)\frac{\pa \psi^m}{\pa s^l} (t,s,x,y)\\
       &+\frac{1}{2}\sum_{l=1}^{n}\sum_{l'=1}^{n}a^{ll'}(t,x)s^ls^{l'}\frac{\pa^2 \psi^m}{\pa s^l \pa s^{l'}}(t,s,x,y)+\sum_{l=0}^n \sum_{j\neq x^l}\lambda_{x^lj}^l(y^l)\left[ \psi^m(t,s,R^l_j x,R^l_0 y)-\psi^m(t,s,x,y) \right],
        \end{align*}
is the infinitesimal generator of the Markov process $(t,\bar{S}_t,X_t,Y_t)$, where $\bar{S}_t=(\bar{S}_t^1,\ldots,\bar{S}_t^n)$ and $\bar{S}_t^l$ satisfies the following SDE
\begin{eqnarray}\label{newsde}
d\bar{S}_t^l=\bar{S}_t^l\left[\left(r(X_t)I+\text{Diag}(a^l(t,X_t)\right)dt+\si(t,X_t)dW_t\right],
\end{eqnarray}
where Diag$(a^l(t,X_t)$ is the diagonal matrix containing the $l^{\text{th}}$ row of $a(t,x)$ and $(X_t,Y_t)$ is as in \eqref{age1}-\eqref{age2}. Therefore the solution of the PDE \eqref{derivf} has the stochastic representation of the following form
\begin{eqnarray}\label{repofsol}
\psi^m(t,s,x,y)=E\left[K'(\bar{S}^m_T)\bigg|\bar{S}^m_t=s,X_t=x,Y_t=y\right],
\end{eqnarray}
where $K':\mathbb{R}^n_+\to \mathbb{R}$ is defined almost everywhere by $K(s)=\int_{0}^{s^m}K'(R^m_rs)dr$, for each $s\in\mathbb{R}^n_+$
Since $K$ is of at-most linear growth and it is Lipschitz continuous, $K'$ is in $L^{\infty}$. Hence \eqref{repofsol} suggests $\psi^m(t,s,x,y)$ is bounded.
\qed

	\section{Pricing and Hedging}\label{section for pricing}
 \textbf{Proof of Theorem \ref{theo5}:}
   Using Lemma \ref{deribnd} we can show that $\pi=(\xi,\vp)$ as given in \eqref{VI3.20} is an admissible portfolio strategy. Indeed $\xi^l_t$ is left continuous and therefore predictable. Hence (A2)(i) and (ii) holds for this pair  $\pi=(\xi,\vp)$. Therefore the discounted value function for this pair of strategy using \eqref{VI3.20} is given by
	\begin{equation*}
	\hat{V}_t(\pi)=\sum_{l=1}^{n}\xi^l_{t}\hat{S}^l_{t}+\vp_t=e^{-\int_{0}^{t}r(X_{u})\,du} \varphi(t,S_{t},X_{t},Y_{t}),
	\end{equation*}
	where $\vf$ is the unique classical solution of \eqref{p1}-\eqref{pdeboundary}.
	Now we shall find a decomposition for $\hat{V}_t(\pi)$. Under the
	measure $P$, we apply It\={o}'s formula to $$e^{-\int_{0}^{t}r(X_{u})\,du} \varphi(t,S_{t},X_{t},Y_{t}).$$ Using \eqref{8}, \eqref{p1} and \eqref{1} and after a suitable rearrangement of terms,  for all $t<T$, we obtain,
	\begin{eqnarray}\label{VI3.24}
	\no e^{-\int_{0}^{t}r(X_{u})\,du} \varphi(t,S_{t},X_{t},Y_{t}) & = &
	\varphi(0,S_{0},X_{0},Y_{0})+\sum_{l=1}^{n}\int_{0}^{t}\frac{\partial
		\varphi}{\partial s^l} (u,S_{u},X_{u-},Y_{u-})d {\hat{S}_u^l}\no
	\no +\int_{0}^{t}e^{-\int_{0}^{u}r(X_{v})\,dv}\\
	& &\no \hspace{-5mm}\int_{\mathbb{R}} [\varphi(u,S_{u},X_{u-} +h(X_{u-},Y_{u-},z),Y_{u-}-g(X_{u-},Y_{u-},z))\\
	&&-\varphi(u,S_{u},X_{u-},Y_{u-})]{\hat{\wp}}(du,dz),
	\end{eqnarray}
	where $\hat{\wp}$ is the compensator of $\wp$, i.e.  $\hat{\wp}(dt,dz)=\wp(dt,dz)-dtdz$.
	Therefore from \eqref{VI3.24}, we have for each $t\leq T$
	\begin{equation}\label{newfs}
	\frac{1}{S^0_t} \varphi(t,S_t,X_{t-},Y_{t-})= H_0+\sum_{l=1}^{n}\int^{t}_{0}{\xi^l_u}d{\hat{S}_u^l}+L_t,
	\end{equation}
   where $H_0=\vf(0,S_{0},X_{0},Y_{0})$ and
	\begin{eqnarray}\label{fsl}
	\no L_t&:=& \int_{0}^{t}e^{-\int_{0}^{u}r(X_{v})dv}\int_{\mathbb{R}} [\varphi(u,S_{u},X_{u-} +h(X_{u-},Y_{u-},z),Y_{u-}-g(X_{u-},Y_{u-},z)) \\
	&&-\varphi(u,S_{u},X_{u-},Y_{u-})]{\hat{\wp}}(du,dz).
	\end{eqnarray}
	Clearly the above choice of $H_0$ is  $\mathfrak{F}_0$ measurable and $L_T$ is  $\mathfrak{F}_T$ measurable. We know that, the integral with respect to a compensated Poisson random measure is a local martingale. Hence $L_t$ is a local martingale. The proof of Proposition \ref{theo4}(iii) suggests that expectation of supremum of $L_t$ is finite. Hence it is a martingale. Again since $W_t$ and $\wp$ are independent, $L_t$  is orthogonal to $\int_0^t \sig^l(t,X_t)\hat{S}_t dW_t$. Thus, we obtain the following F-S decomposition by letting $t \uparrow T $ in \eqref{newfs},
	\begin{equation}\label{VI3.25}
	{S^0_T}^{-1}K(S_T) = \varphi(0,S_{0},X_{0},Y_{0}) + \sum_{l=1}^{n}\int^{T}_{0}{\xi^l_t}d{\hat{S}^l_t} + L_T.
	\end{equation}
	This completes the proof.\qed

	\begin{theo}
		Let $ \vf $ be the unique solution of \eqref{main integral eqn}.  Set
			\begin{align}
			\no\eta(t,s,x,y):=&\sum_{l=0}^{n}P_{t,x,y}(\ell(t)=l) \left( \frac{\partial \rho_x(t,s)}{\partial s^m}\left(1-F_{\tau^l\mid l}(T-t\mid x,y)\right) + \int_{0}^{T-t}e^{-r(x) v}f_{\tau^l|l}(v|x,y)\times\right.\\
			&\left.\sum_{j^l\neq x^l}p^l_{x^lj^l}(y^l+v)\int_{\RR_+^n}\vf(t+v,\vs,R^l_jx,R^0_ly)\frac{\pa\alpha(\vs;t,s,x,v)}{\pa s^m}\,d\vs\,dv \right),\label{20}
			\end{align}
			where $(t,s,x,y)\in \mathcal{D}$.
			Then $\eta(t,s,x,y)=\frac{\pa \vf}{\pa s^m}(t,s,x,y) $,
	\end{theo}
	\begin{proof}
		We need to show that $\psi$ (as in \eqref{20}) is equal to $\frac{\partial\varphi}{\partial s^m}$. Indeed, one obtains the RHS of \eqref{20} by differentiating the right side of \eqref{main integral eqn} with respect to $s^m$. Hence the proof. \qed
	\end{proof}
	
	\begin{remark}\label{rem1}
		
		\noi We have shown in Theorem \ref{theo5} that $ \frac{\pa \vf}{\pa s^m}(t,s,x,y)$ is a necessary quantity to be calculated in order to find the optimal hedging. Attempting to compute $ \frac{\pa \vf}{\pa s^m}(t,s,x,y)$ using numerical differentiation would increase the sensitivity of $ \frac{\pa \vf}{\pa s^m}(t,s,x,y)$ to small errors. Equation \eqref{20} gives a better, more robust approach for computing $ \frac{\pa \vf}{\pa s^m}(t,s,x,y)$, using numerical integration.
	\end{remark}


	\section{Sensitivity with respect to the instantaneous rate function}\label{section for senitivity}
	
	\noi In a recent paper Goswami et al. \cite{AGSN} gave an interesting idea to approximate the solution by approximating the transition rate. In the previous section we have seen that for a class of continuously differentiable transition rate function, there exists a unique classical solution of the PDE \eqref{p1}-\eqref{pdeboundary}. Let $\la:=(\la^0,\ldots,\la^n)$ be a vector where $\la^l$ is as in section \ref{model}.  We state and prove the important result below.
	
  \begin{theo}\label{3}
		 Let $\vf$ and $\tilde\vf$ be two solutions of \eqref{p1}-\eqref{pdeboundary} with parameter $\la$ and $\tilde \la$ respectively. Then
		 $\|\vf-\tilde\vf\|_{\text{sup}}\leq 2c_2 T\|\la-\tilde{\la}\|_{\text{sup}}$, where $c_2$ as in Assumption \emph{(A2)(ii)}.
 	\end{theo}

	\proof
 Let $\vf$ be the classical solution and  $\tilde\vf$ be its TBA. We consider
   \begin{equation}\label{defpsi}
   \psi(t,s,x,y) := \vf(t,s,x,y) - \tilde\vf(t,s,x,y).
   \end{equation}
	
	\noi Now, it is easy to see that $\psi$ satisfies the following initial  value problem,
	\begin{align}\label{difference}
	&\no D_{t,y}\psi(t,s,x,y) +r(x)\sum_{l=1}^n s^l \frac{\pa \psi}{\pa s^l}(t, s, x, y)+\frac{1}{2}\sum_{l=1}^{n}\sum_{l'=1}^{n}a^{ll'}(t,x)s^ls^{l'}\frac{\partial^2\psi}{\partial s^{l}\pa s^{l'}}(t,s,x,y)\\
	 &\no +\sum_{l=0}^{n}\sum_{j \neq x^l} \la^l_{x^l j}(y^l) \Big(\psi(t,s,R_j^l x,R_0^l y) -  \psi(t,s,x,y)\Big)\\
	 &= r(x) \psi(t,s,x,y) - \sum_{l=0}^{n}\sum_{j \neq x^l} \Big(\lambda^l_{x^l j}(y^l)-\tilde\lambda^l_{x^l j}(y^l)\Big)\Big( \tilde\vf(t,s,R_j^l x,R_0^l y) - \tilde\vf(t,s,x,y)\Big),
	\end{align}
	defined on
	$$
	\mathcal{D}:=\{(t,s,x,y) \in (0,T) \times \mathbb{R}^n_{+} \times \mathcal{X}^{n+1} \times (0,T)^{n+1} | y \in (0,t)^{n+1}\},
	$$
	with condition
	$$\psi(T,s,x,y) =0, ~~ s \in \mathbb{R}^n_{+}; ~~ 0 \le y^l \le T; ~~ x=1,2,\cdots,k.$$
	
	\noi We rewrite \eqref{difference} using \eqref{defpsi} as
	\begin{equation}\label{diffpde}
	\mathcal{A} \psi(t,s,x,y) = r(x) \psi(t,s,x,y) - f(t,s,x,y),
	\end{equation}
	where
	\begin{align*}
	f(t,s,x,y):=\sum_{l=0}^{n}\sum_{j \neq x^l} \Big(\lambda^l_{x^l j}(y^l)-\tilde\lambda^l_{x^l j}(y^l)\Big)\Big( \tilde\vf(t,s,R_j^l x,R_0^l y) - \tilde\vf(t,s,x,y)\Big).
	\end{align*}
	We recall that $\mathcal{A}$ is the infinitesimal generator of $(t,\tilde S_t,X_t,Y_t)$. Using the proof of Proposition \ref{theo4}(iii), one can show that for all $(t,s,x,y)\in \mathcal{D}$
		\begin{equation}\label{fbnd}
		|f(t,s,x,y)|\le 2c_2\sum_{l=0}^n\sum_{j\neq x^l}\|\lambda_{x^lj}^l(y)-\tilde\lambda_{x^lj}^l(y)\|_{\text{sup}}.
		\end{equation}
	 The stochastic representation of the solution of the PDE \eqref{diffpde} is given by,
	\begin{equation}
	\psi(t,s,x,y) = E[\int_t^T \exp\left(-\int_t^v r(X_u)du\right) f(v,\tilde{S}_v,X_v,Y_v)dv | \tilde S_t=s, X_t=x, Y_t=y].
	\end{equation}

	\noi Since $\tilde{\vf}$ is a solution of \eqref{p1}-\eqref{pdeboundary} for parameter $\tilde{\la}$, then the proof of Proposition \ref{theo4}(iii), $|\tilde\vf(t,s,R_j^l x,R_0^l y) - \tilde\vf(t,s,x,y)|<2c_2.$
	Now using \eqref{fbnd} and  $r>0$ for all $t\leq v\leq T$, we have
	\begin{eqnarray*}
\|\psi(t,s,x,y)\|_\text{sup} &=& \sup_{\bar{\mathcal{D}}}\Big| E[\int_t^T \exp\left(-\int_t^v r(X_u)du\right) f(v,\tilde{S}_v,X_v,Y_v)dv | \tilde S_t=s, X_t=x, Y_t=y]\Big|\\
	&\leq &2c_2(T-t)\sum_{l=0}^n\sum_{j\neq x^l}\|\lambda_{x^lj}^l(y)-\tilde\lambda_{x^lj}^l(y)\|_{\text{sup}}\\
	&<& 2c_2 T\sum_{l=0}^n\sum_{j\neq x^l}\|\lambda_{x^lj}^l(y)-\tilde\lambda_{x^lj}^l(y)\|_{\text{sup}}.
	\end{eqnarray*}
 Hence the proof is completed.\qed
	\begin{rem}
	It is interesting to note that a weaker variant of Theorem \ref{3} can also be proved if the assumption \emph{(A2)(ii)} is relaxed. Indeed if $K\in \mathcal{B}$ for such case $\|\vf-\tilde{\vf}\|_L\leq M\|\la-\tilde{\la}\|_{\text{sup}}$. This readily follows from the fact that $\tilde{\vf}$ is of at most  linear growth and $\tilde{S}_t$ has finite expectation.
	\end{rem}
	
	\section{Calculation of the Quadratic Residual Risk}\label{section for risk residual}
	In this section we find an expression of the quadratic residual of risk.
	Let $\pi:=(\xi_t,\vp_t)$, where $\xi_t=(\xi^1_t,\ldots,\xi^n_t)$ be the admissible strategy and $V_t$ be the value process as defined in Section \ref{section for pricing approach}. Further we assume that $ \{C_t\}_{t\geq0} $ be the accumulated additional cash flow process associated with the optimal hedging of the contingent claim $ H $, where
		\begin{eqnarray*}
		dC_t=dV_t-\sum_{l=1}^{n}\xi^l_t dS^l_t-\vp_tdS^0_t.
		\end{eqnarray*}
		One can show that
		\begin{eqnarray}\label{cash}
		\frac{1}{S^0_t}dC_t=d\hat{V}_t-\sum_{l=1}^{n}\xi^l_t d\hat{S}^l_t,
		\end{eqnarray}
		where $\hat{V}_t$ is the discounted value process as defined in Section \ref{section for pricing approach}. Now by \eqref{11a}, we have
		\begin{eqnarray}\label{deriF-S}
		d\hat{V}_t=\sum_{l=1}^{n}\xi^l_t d\hat{S}^l_t+dL^{\hat{H}}_t.
		\end{eqnarray}
		Now comparing \eqref{cash} and \eqref{deriF-S}, we have
		\begin{eqnarray*}
		\frac{1}{S^0_t}dC_t=dL^{\hat{H}}_t.
		\end{eqnarray*}
	The discounted value, at $ t=0 $, of the accumulated cash flow during $ [0,T] $ is
	\begin{equation*}
	\hat{C}_T-\hat{C}_0:=\int_0^T \frac{1}{S^0_t}\,dC_t=L^{\hat{H}}_t.
\end{equation*}
Using  above and \eqref{fsl}, we get
	\begin{equation*}
	L_T^{\hat{H}}=\int_0^T\frac1{S^0_t}\int_{\RR}\left(\vf(t,S_t,X_{t-}+h(X_{t-},Y_{t-},z),Y_{t-}-g(X_{t-},Y_{t-},z)-\vf(t,S_t,X_{t-},Y_{t-})\right)\,\hat{\wp}(dt,dz).
   \end{equation*}
	Thus
	\begin{equation}\label{36}
   dC_t=\int_{\RR}\left(\vf(t,S_t,X_{t-}+h(X_{t-},Y_{t-},z),Y_{t-}-g(X_{t-},Y_{t-},z)-\vf(t,S_t,X_{t-},Y_{t-})\right)\,\hat{\wp}(dt,dz).
	\end{equation}
	Integrating the above expression, we obtain the external cash flow associated with the optimal hedging.
		Hence,
			\begin{align}\label{cashflow}
		\no	C_T=& C_0+\int_0^T \int_{\RR}\left(\vf(t,S_t,X_{t-}+h(X_{t-},Y_{t-},z),Y_{t-}-g(X_{t-},Y_{t-},z)-\vf(t,S_t,X_{t-},Y_{t-})\right)\,\hat{\wp}(dt,dz)\\
			\no =& C_0 + \sum_{t\in[0,T]}\left( \vf(t,S_t,X_t,Y_t)-\vf(t,S_t,X_{t-},Y_{t-}) \right) - \int_0^T \sum_{l=0}^{n}\sum_{j\neq X^l_{t-}} \lambda^l_{X^l_{t-}j}(Y^l_{t-})\times\\
			&\left[ \vf(t,S_t,R^l_jX_{t-},R^l_0Y_{t-}) -\vf(t,S_t,X_{t-},Y_{t-}) \right]\,dt,
			\end{align}
	Given a strategy $ \pi $, we can define the \emph{quadratic residual risk} at $ t=0 $, denoted by $ \mathcal{R}_0(\pi) $, which is given by $ \mathcal{R}_0(\pi):=E[(\hat{C}_T-\hat{C}_0)^2|\mathcal{F}_0] $.
	
	\begin{lem}\label{theo 4.1}
	 The quadratic variation process of $ C_t $ is given by
		\begin{equation}
		[C]_t=\sum_{r\in[0,t]}\left(\vf(r,S_r,X_r,Y_r)-\vf(r,S_r,X_{r-},Y_{r-}) \right)^2,
		\end{equation}
		where $ \vf $ is the unique classical solution of \eqref{p1}-\eqref{pdeboundary} with at most linear growth.
	\end{lem}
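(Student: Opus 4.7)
The plan is to start from the representation of $dC_t$ in \eqref{36},
$$dC_t=\int_{\RR}F(t,z)\,\hat{\wp}(dt,dz),\qquad F(t,z):=\vf(t,S_t,X_{t-}+h(X_{t-},Y_{t-},z),Y_{t-}-g(X_{t-},Y_{t-},z))-\vf(t,S_t,X_{t-},Y_{t-}),$$
and to decompose $C_t-C_0$ as a sum of jumps against $\wp$ minus a continuous compensator
$$\int_0^t\!\!\int_{\RR}F(r,z)\,\wp(dr,dz)-\int_0^t\!\!\int_{\RR}F(r,z)\,dr\,dz.$$
The compensator term has absolutely continuous paths and therefore contributes nothing to the quadratic variation, so $[C]_t=\sum_{r\in[0,t]}(\Delta C_r)^2$.

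The next step is to identify the jumps of $C$. The stock prices $(S^1,\dots,S^n)$ have continuous sample paths because the SDE \eqref{sde} is driven only by the Wiener processes $W^j$, so $S_r=S_{r-}$ identically. The process $(X,Y)$ can jump only at atoms of the random measures $\wp^l$, and at such an atom $(r,z)$ the relations \eqref{age1}--\eqref{age2} give $X_r=X_{r-}+h(X_{r-},Y_{r-},z)$ and $Y_r=Y_{r-}-g(X_{r-},Y_{r-},z)$. Substituting into the definition of $F$, we obtain $\Delta C_r=\vf(r,S_r,X_r,Y_r)-\vf(r,S_r,X_{r-},Y_{r-})$ at every atom. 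Atoms of $\wp$ falling outside the union of indicator sets $\Lambda^l_{X^l_{r-} j}(Y^l_{r-})$ yield $h=g=0$ and hence contribute zero automatically, while the $P$-null event that two distinct $\wp^l$'s have simultaneous atoms can be ignored. Summing over $r\in[0,t]$ yields the claimed identity.

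Before invoking the sum-of-squared-jumps representation I would verify the square-integrability of $F$ on bounded intervals, which is what legitimises treating the jump integral as an $L^2$ purely discontinuous martingale. By Proposition \ref{theo4}(iii) one has $|\vf(r,S_r,x,y)|\le \|\vf\|_L(1+\|S_r\|_1)$, while for fixed $(r,\omega)$ the $z$-support of $F(r,\cdot)$ is a finite union of intervals of total length $\sum_{l}\sum_{j\neq X^l_{r-}}\lambda^l_{X^l_{r-} j}(Y^l_{r-})$, which is bounded on $[0,T]$ by continuity of the rate functions on the compact state space. Combined with Lemma \ref{Lem1}(ii) these bounds give $E\int_0^t\int_{\RR}F(r,z)^2\,dr\,dz<\infty$. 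I do not anticipate a deeper obstacle; the content is essentially a careful bookkeeping argument on the jump structure together with the standard fact that the continuous compensator drops out of $[C]$.
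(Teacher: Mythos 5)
Your proposal is correct, and it reaches the conclusion by a cleaner, more structural route than the paper. The paper works directly from the explicit form \eqref{cashflow} of $C_t$ (sum of jumps minus the $dt$-compensator) and computes $[C]_t$ as the limit of $\sum_r (C_r-C_{r-\Delta})^2$ over partitions, arguing by hand that the cross term (which carries a factor $\Delta$ and is nonzero only on a set of partition points of measure $O(\Delta)$) and the squared compensator term (of order $\Delta^2$) vanish in the limit. You instead decompose $C_t-C_0$ into the pure-jump integral against $\wp$ and the absolutely continuous compensator, and invoke the standard fact that a continuous finite-variation component contributes nothing to the quadratic variation, so that $[C]_t=\sum_{r\le t}(\Delta C_r)^2$; the remaining work is the identification of $\Delta C_r$ with $\vf(r,S_r,X_r,Y_r)-\vf(r,S_r,X_{r-},Y_{r-})$, which you carry out carefully (continuity of $S$, jumps of $(X,Y)$ only at atoms of the $\wp^l$, null probability of simultaneous atoms). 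That identification is the genuine mathematical content and is common to both arguments; the paper's partition computation is essentially a self-contained proof of the general fact you cite. Your additional verification of $E\int_0^t\int_{\RR}F(r,z)^2\,dr\,dz<\infty$ via the linear growth of $\vf$, Lemma \ref{Lem1}(ii), and the boundedness of the total rate $\sum_{l}\sum_{j\neq X^l_{r-}}\lambda^l_{X^l_{r-}j}(Y^l_{r-})$ is a worthwhile supplement that the paper glosses over (it asserts the martingale property of $L_t$ earlier, in Section \ref{section for pricing}, by a similar appeal). In short: same key identification of the jumps, but your appeal to the canonical decomposition replaces the paper's explicit Riemann-sum limit, which buys brevity at the cost of invoking more of the general semimartingale machinery.
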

	\begin{proof}
  It is clear that $ C_t $ as in \eqref{cashflow} is an \emph{rcll} process. Now, for $ r\in(0,T) $ and for sufficiently small $ \Delta $, we have
		\begin{align*}
		\lefteqn{(C_r-C_{r-\Delta})^2= \left( \vf(r,S_r,X_r,Y_r)-\vf(r,S_r,X_{r-\Delta},Y_{r-\Delta}) \right)^2 - 2\left(\vf(r,S_r,X_r,Y_r)-\vf(r,S_r,X_{r-\Delta},Y_{r-\Delta})  \right) \times}\\
		& \sum_{l=0}^{n}\sum_{j\neq X^l_{r-\Delta}}\lambda^l_{X^l_{r-\Delta}j}(Y^l_{r-\Delta})\left[ \vf(r,S_r,R^l_jX_{r-\Delta},R^l_0Y_{r-\Delta}) -\vf(r,S_r,X_{r-\Delta},Y_{r-\Delta}) \right]\Delta\\
		&+\left(\sum_{l=0}^{n}\sum_{j\neq X^l_{r-\Delta}} \lambda^l_{X^l_{r-\Delta}j}(Y^l_{r-\Delta}) \left[ \vf(r,S_r,R^l_jX_{r-\Delta},R^l_0Y_{r-\Delta} -\vf(r,S_r,X_{r-\Delta},Y_{r-\Delta}) \right]\right)^2 \Delta^2+ O\left({\Delta^2}\right).
		\end{align*}
		Since the quadratic variation of $ C_t $ is the limit of the sum $ \sum_{r\in[0,t]}(C_r-C_{r-\Delta})^2 $ over a partition with $ \Delta\rightarrow 0 $, we take the summation both sides. We note that the second term, the multiplier of $\Delta$ is bounded and is of $O(\Delta)$ except a set of whose measure is $O(\Delta)$, Thus the summation of second, third and fourth terms in the above expression can be ignored. Hence,
		\begin{equation}
		[C]_t=\sum_{r\in[0,t]}\left[\vf(r,S_r,X_r,Y_r)-\vf(r,S_r,X_{r-},Y_{r-}) \right]^2.
		\end{equation}\qed
	\end{proof}
	
	\noi An expression for $ \mathcal{R}_0(\pi) $ can be found using It\=o's isometry. Further using Lemma \ref{theo 4.1}, we get
	\begin{align}
	\no \mathcal{R}_0(\pi)=& E[(\hat{C}_T-\hat{C}_0)^2|\mathcal{F}_0]\\
	\no =& E\left[\left(\int^T_0 \frac1{S^0_t}\,dC_t \right)^2 \mid \mathcal{F}_0 \right]\\
	\no =& E\left[\int^T_0 \frac{1}{{S^0_t}^2}\,d[C]_t\mid \mathcal{F}_0 \right]\\
	\no=& E\left[ \sum_{t\in[0,T]}\frac{1}{{S^0_t}^2}\left(\vf(t,S_t,X_t,Y_t)-\vf(t,S_t,X_{t-},Y_{t-})\right)^2\mid \mathcal{F}_0 \right]\\
	\no=& E\left[ \frac{1}{{S^0_t}^2}\left(\hat{\vf}(t,S_t,X_t,Y_t)-\hat{\vf}(t,S_t,X_{t-},Y_{t-})\right)^2\mid \mathcal{F}_0 \right]\\
	=& E\left[\sum_{m=1}^{m(T)}\left(\hat{\vf}(T_m,S_{T_m},X_{T_m},Y_{T_m})-\hat{\vf}(T_m,S_{T_m},X_{T_{m-1}},T_m-T_{m-1})\right)^2  \mid \mathcal{F}_0 \right]\label{38}
	\end{align}

	\section*{Appendix}
	
	\textbf{Proof of Lemma \ref{ptxyltl}:}\\
	\noi (i) In order to compute $P_{t,x,y}(\ell(t)=l)$, we first derive  the conditional c.d.f $F_{\tau^l}(\cdot|i,\bar{y})$.
	\begin{eqnarray}\label{eqf}
	\no F_{\tau^l}(s|i,\bar{y})&=&P(0\leq\tau^l(t)\leq s|X^l_t=i,Y^l_t=\bar{y})\\
	\no &=& P(\tau^l(t)+Y^l_t\leq s+\bar{y}|X^l_t=i,Y^l_t=\bar{y})\\
	\no &=& P(Y^l_{T_{n^l(t)+1}-}\leq s+\bar{y}|Y^l_{T_{n^l(t)}-}\geq \bar{y},X^l_t=i,Y^l_t=\bar{y})\\
	& = & \frac{F^l(s+\bar{y}|i)-F^l(\bar{y}|i)}{1-F^l(\bar{y}|i)}\quad\quad l=0,1,\ldots,n .
	\end{eqnarray}
	Let $F_{\tau^l}(s|i,\bar{y}):=f_{\tau^l}(s|i,\bar{y})$. Therefore
	\begin{eqnarray}\label{eqft}
	f_{\tau^l}(\cdot|i,\bar{y})=\frac{f^l(\cdot+\bar{y}|i)}{1-F^l(\bar{y}|i)}.
	\end{eqnarray}
	Let $F_{\tau^{-l}}(\cdot|x,y)$ denotes conditional c.d.f of $\tau^{-l}(t)$ given $X_t=x$ and $Y_t=y$ as is given by $1-\displaystyle\prod_{m\neq l} \bigg(1-F_{\tau^m}(\cdot|x^m,y^m)\bigg)$, where $\tau^{-l}(t):=\displaystyle\min_{m\neq l}\tau^m(t)$.

	Therefore it follows from the definition of $\ell(t)$ that,  $P_{t,x,y}(\ell(t)=l)=P_{t,x,y}(\tau^l(t)<\tau^{-l}(t))$. We compute this probability  using conditioning on $\tau^l(t)$. Therefore
	\begin{eqnarray*}
	P_{t,x,y}(\ell(t)=l)& = & E_{t,x,y}[P_{t,x,y}(\tau^l(t)<\tau^{-l}(t)|\tau^l(t))]\\
	& = &\Int_0^\infty (1-F_{\tau^{-l}}(s|x,y))f_{\tau^l}(s|x^l,y^l) ds\\
	& = &\Int_0^\infty\displaystyle\prod_{m\neq l}(1-F_{\tau^m}(s|x,y))f_{\tau^l}(s|x^l,y^l)ds.
	\end{eqnarray*}
	Now using, \eqref{eqf} and \eqref{eqft}, one get
	\begin{eqnarray}\label{eql}
	P_{t,x,y}(\ell(t)=l)=\Int_0^\infty\displaystyle\prod_{m\neq l}\frac{1-F^m(s+y^m|x^m)}{1-F^m(y^m|x^m)}\frac{f^l(s+y^l|x^l)}{1-F^l(y^l|x^l)}ds.
	\end{eqnarray}
	This completes the proof of (i).
	
	\noi (ii) Also, from the definition of $F_{\tau^l|l}(v|x,y)$ we have,
	\begin{eqnarray*}
	F_{\tau^l|l}(v|x,y)&=&P_{t,x,y}(\tau^l(t)\leq v|\ell(t)=l)\\
	& = & \frac{P_{t,x,y}(\tau^l(t)\leq v,\ell(t)=l)}{P_{t,x,y}(\ell(t)=l)}.
	\end{eqnarray*}
	Again to compute $P_{t,x,y}(\tau^l(t)\leq v,\ell(t)=l)$ we use conditioning on $\tau^l(t)$, therefore
	\begin{eqnarray}\label{ptxytault}
	\no  P_{t,x,y}(\tau^l(t)\leq v,\ell(t)=l)&=& E_{t,x,y}[P_{t,x,y}(\tau^{-l}(t)>\tau^l(t),\tau^l(t)\leq v|\tau^l(t))]\\
	\no & = & \Int_0^v P_{t,x,y}(\tau^{-l}(t)>\tau^l(t)|\tau^l(t)=s) f_{\tau^l}(s|x^l,y^l)ds\\
	\no & = & \Int_0^v(1-P_{t,x,y}(\tau^{-l}(t)\leq s))f_{\tau^l}(s|x^l,y^l)ds\\
	& = & \Int_0^v\displaystyle\prod_{m\neq l}(1-F_{\tau^m}(s|x,y))f_{\tau^l}(s|x^l,y^l)ds .
	\end{eqnarray}
	Now substituting \eqref{eqf},\eqref{eql} in \eqref{ptxytault} we have,
	\begin{eqnarray}\label{Ftaullv}
	F_{\tau^l|l}(v|x,y)= \frac{\Int_0^v\displaystyle\prod_{m\neq l}(1-F^m(s+y^m|x^m))f^l(s+y^l|x^l)ds}{\Int_0^\infty\displaystyle\prod_{m\neq l}(1-F^m(s+y^m|x^m))f^l(s+y^l|x^l)ds}.
	\end{eqnarray}
	Since $\displaystyle\prod_{m\neq l}(1-F^m(s+y^m|x^m))f^l(s+y^l|x^l)$ is $C^1$ for all $s\in [0,T]$, by the fundamental Theorem of calculas we can conclude that $F_{\tau^l|l}(v|x,y)$
	is twice differentiable for all $v$.
	\qed \\
	\begin{eqnarray}\label{exftaul}
	f_{\tau^l|l}(v|x,y)=\frac{\displaystyle\prod_{m\neq l}(1-F^m(v+y^m|x^m))f^l(v+y^l|x^l)}{\Int_0^\infty\displaystyle\prod_{m\neq l}(1-F^m(s+y^m|x^m))f^l(s+y^l|x^l)ds},
	\end{eqnarray}
	is differentiable with respect to $v$.

	\noi The proof of (iii) is straightforward.

	\textbf{Proof of Lemma \ref{regularity}:}
	\noi(i) We will show that, $P_{t,x,y}(\ell(t)=l)$ and $F_{\tau^l|l}(T-t|x,y)$ is in dom$(D_{t,y})$. Consider a function
	$\digamma^l_v(x,y):=\int_0^v\prod_{m\neq l}(1-F^m(s+y^m|x^m))f^l(s+y^l|x^l)ds$ and $\digamma^l_\infty(x,y):=\displaystyle\lim_{v\rightarrow\infty}\digamma^l_v
	(x,y)$. Then the function $\digamma^l_v(x,y)$ is continuously differentiable with respect to $\digamma^l_v(x,y)$ and we denote this derivative by $\digamma^{l'}_v(x,y)$. Therefore,
	\begin{equation}\label{digprim}
	\digamma^{l'}_v(x,y):=\prod_{m\neq l}(1-F^m(v+y^m|x^m))f^l(v+y^l|x^l).
	\end{equation}
	Since $\digamma^l_v(x,y)$ is not depending upon $t$, we check the differentiability in $y$. To this end we first show the existence of the following limit
	\begin{eqnarray*}
	\lim_{\vp\to 0}&\frac{1}{\vp}\Big  [\int_{0}^{v}\prod_{m\neq l}(1-F^m(s+y^m+\vp|x^m))f^l(s+y^l+\vp|x^l)ds\\
	&-\int_{0}^{v}\prod_{m\neq l}(1-F^m(s+y^m|x^m))f^l(s+y^l|x^l)ds\Big].
	\end{eqnarray*}
	By a suitable substitution of variable, the expression in the above limit is
	\begin{eqnarray*}
	&\frac{1}{\vp}\Big  [\int_{v}^{v+\vp}\prod_{m\neq l}(1-F^m(s+y^m|x^m))f^l(s+y^l+\vp|x^l)ds\\
	&-\int_{0}^{\vp}\prod_{m\neq l}(1-F^m(s+y^m|x^m))f^l(s+y^l|x^l)ds\Big].
	\end{eqnarray*}
	The above expression converges to $\digamma^{l'}_v(x,y)-\digamma^{l'}_0(x,y)$ as $\vp\rightarrow 0$ and the limit is continuous in $y$. Thus $$D_{t,y}\digamma^l_v(x,y)=\digamma^{l'}_v(x,y)-\digamma^{l'}_0(x,y).$$ If $v$ is a differentiable function of $t$, then $$D_{t,y}\digamma^l_v(x,y)=\digamma^{l'}_v(x,y)\left(1+\frac{\partial v}{\partial t}\right)-\digamma^{l'}_0(x,y).$$
	Hence
	\begin{equation}\label{eq2}
	D_{t,y}\digamma_v^l(x,y)=
	\begin{cases}
	\digamma^{l'}_v(x,y)\left(1+\frac{\partial v}{\partial t}\right)-\digamma^{l'}_0(x,y) & 0<v<\infty \\
	-\digamma^{l'}_0(x,y) & v=\infty.
	\end{cases}
	\end{equation}
	Since $$D_{t,y}\prod_m(1-F^m(v+y^m|x^m))=-\sum_rf^r(y^r|x^r)\prod_{m\neq r}(1-F^m(y^m|x^m)$$
	Hence $P_{t,x,y}(\ell(t)=l)=\frac{\digamma^l_\infty(x,y)}{\prod_m(1-F^m(y^m|x^m))}$ and $F_{\tau^l|l}(T-t|x,y)=\frac{\digamma^l_{T-t}(x,y)}{\digamma^l_\infty(x,y)}$. Hence $P_{t,x,y}(\ell(t)=l)$ and $F_{\tau^l|l}(T-t|x,y)$ are in the domain of $D_{t,y}$. Now operating $D_{t,y}$ on $P_{t,x,y}(\ell(t)=l)$  and using \eqref{eqft}, \eqref{digprim} we have
	\begin{eqnarray*}
	D_{t,y}P_{t,x,y}(\ell(t)=l)&=&\frac{D_{t,y}\digamma_\infty(x,y)}{\prod_m(1-F^m(v+y^m|x^m))}\\
	&&+\frac{\digamma_\infty(x,y)\times\sum_rf^r(y^r|x^r)\prod_{m\neq r}(1-F^m(y^m|x^m))}{(\prod_m(1-F^m(v+y^m|x^m)))^2}\\
	&=&-\frac{\digamma'_0(x,y)}{\prod_m(1-F^m(v+y^m|x^m))}+\sum_{r=0}^{n}\frac{f^r(y^r|x^r)}{(1-F^r(v+y^r|x^r))}P_{t,x,y}(\ell(t)=l)\\
	&=& \sum_{r=0}^{n} f_{\tau^r}(0|x^r,y^r)P_{t,x,y}(\ell(t)=l)-f_{\tau^l}(0|x^l,y^l).
	\end{eqnarray*}
	Operating $D_{t,y}$ on $F_{\tau^l|l}(T-t|x,y)$
	\begin{eqnarray*}
	D_{t,y}F_{\tau^l|l}(T-t|x,y)&=&\frac{D_{t,y}\digamma_{T-t}(x,y)}{\digamma_\infty(x,y)}-\frac{\digamma_{T-t}(x,y)D_{t,y}\digamma_\infty(x,y)}{\digamma^2_\infty(x,y)}\\
	&=& -\frac{\digamma'_0(x,y)}{\digamma_\infty(x,y)}+\frac{\digamma_{T-t}(x,y)\digamma'_0(x,y)}{\digamma^2_\infty(x,y)}\\
	&=& f_{\tau^l|l}(0|x,y)(F_{\tau^l|l}(T-t|x,y)-1).
	\end{eqnarray*}
	
	(ii) Using assumption (A1)(ii), we can conclude that $f_{\tau^l|l}(t|x,y)$ is in dom($D_{t,y}$).

	(iii) \noi From \eqref{covsigma}, we get that $\Sigma^{-1}$ exists for all $v>0$ and is differentiable in $t$ and $v$. Therefore $\alpha(\vs;t,s,x,v)$ defined in \eqref{expalpha} is differentiable in $t$ and $v$. Taking logarithm on both the sides of \eqref{expalpha}, we have
	\begin{equation}\label{logexpalpha}
	\ln\alpha(\vs;t,s,x,v)=-\ln\left(\frac{1}{\sqrt{(2\pi)^n}\vs_1\vs_2\ldots\vs_n}\right)
	-\frac{1}{2}\ln|\Sigma|-\frac{1}{2}\sum_{ll'}\Sigma^{-1}_{ll'}(z^l-\bar{z}^l)(z^{l'}-\bar{z}^{l'}).
	\end{equation}
	Now taking derivative on both the sides of \eqref{logexpalpha} with respect to $t$ and using Jacobi's formula.
	\begin{align}\label{alphat}
\no	\alpha_t&=\alpha\left(-\frac{1}{2}\frac{|\Sigma|}{
		|\Sigma|}tr\left(\Sigma^{-1}\frac{\partial\Sigma}{\partial t}\right)-\frac{1}{2}\sum_{ll'}\Sigma^{-1}_{ll't}(z^l-\bar{z}^l)(z^{l'}-\bar{z}^{l'})+\frac{1}{2}\sum_{ll'}\Sigma^{-1}_{ll'}\bar{z}_t^l(z^{l'}-\bar{z}^{l'})+\frac{1}{2}\sum_{ll'}\Sigma^{-1}_{ll'}(z^l-\bar{z}^l)\bar{z}^{l'}_t\right)\\
	&=\alpha\left(-\frac{1}{2}tr\left(\Sigma^{-1}\frac{\partial\Sigma}{\partial t}\right)-\frac{1}{2}\sum_{ll'}\Sigma^{-1}_{ll't}(z^l-\bar{z}^l)(z^{l'}-\bar{z}^{l'})+\frac{1}{2}\sum_{ll'}\Sigma^{-1}_{ll'}\bar{z}_t^l(z^{l'}-\bar{z}^{l'})+\frac{1}{2}\sum_{ll'}\Sigma^{-1}_{ll'}(z^l-\bar{z}^l)\bar{z}^{l'}_t\right).
	\end{align}
	Similarly
	\begin{equation}\label{alphav}
	\alpha_v=\alpha\left(-\frac{1}{2}tr\left(\Sigma^{-1}\frac{\partial\Sigma}{\partial v}\right)-\frac{1}{2}\sum_{ll'}\Sigma^{-1}_{ll'v}(z^l-\bar{z}^l)(z^{l'}-\bar{z}^{l'})+\frac{1}{2}\sum_{ll'}\Sigma^{-1}_{ll'}\bar{z}_v^l(z^{l'}-\bar{z}^{l'})+\frac{1}{2}\sum_{ll'}\Sigma^{-1}_{ll'}(z^l-\bar{z}^l)\bar{z}^{l'}_v\right),
	\end{equation}
	where $\Sigma^{-1}_{ll't}:=\frac{\pa \Sigma^{-1}_{ll}}{\pa t}$ and $\Sigma^{-1}_{ll'v}:=\frac{\pa \Sigma^{-1}_{ll}}{\pa v}$ . In similar manner one can show $\alpha$ is infinite times continuously differentiable in $s$. \qed	\section*{Acknowledgement} The second author acknowledges Mrinal K. Ghosh for some useful discussion.

\end{document}